\documentclass[a4paper, USenglish, cleveref, nameinlink, colorlinks=true, linkcolor=blue!75!green, urlcolor=blue!75!green, citecolor=blue!75!green, autoref]{lipics-v2021}

\usepackage{graphicx}
\usepackage[utf8]{inputenc}
\usepackage{xspace}
\usepackage{apptools}
\usepackage{mathtools}
\usepackage{orcidlink}
\usepackage{booktabs}
\usepackage[inline]{enumitem}
\usepackage{subcaption}
\usepackage[absolute,overlay]{textpos}
\usepackage{cite}

\nolinenumbers

\crefname{theorem}{Thm.}{Thms.}
\Crefname{theorem}{Theorem}{Theorems}
\crefname{corollary}{Cor.}{Cors.}
\Crefname{corollary}{Corollary}{Corollaries}

\graphicspath{{figures/}}

\newcounter{slope}
\crefname{slope}{}{}
\crefformat{slope}{#2#1#3}
\crefmultiformat{slope}{#2#1#3}{, #2#1#3}{, #2#1#3}{, #2#1#3}
\makeatletter
\newcommand{\slopeitem}[2]{%
  \item[$#1$]%
  \refstepcounter{slope}%
  \protected@edef\@currentlabel{$#1$}%
  \label{#2}%
}
\makeatother

\title{%
\texorpdfstring{%
    How Many Slopes Does Polynomial Area Cost?\\
    \smallskip
    \small The Slopebusters give the answer for planar graph drawings.}{%
How Many Slopes Does Polynomial Area Cost?
}}

\titlerunning{How Many Slopes Does Polynomial Area Cost?}

\author{Michael A. Bekos}{University of Ioannina, Greece \and \url{https://myweb.uoi.gr/bekos/} }{bekos@uoi.gr}{https://orcid.org/0000-0002-3414-7444}{}
\author{Eleni Katsanou}{National Technical University of Athens,  Greece} {ekatsanou@mail.ntua.gr}{https://orcid.org/0000-0002-1001-1411}{}
\author{Philipp Kindermann}{Trier University, Germany \and \url{https://algo.uni-trier.de/~kindermann}} {kindermann@uni-trier.de}{https://orcid.org/0000-0001-5764-7719}{}
\author{Maria Eleni Pavlidi}{University of Ioannina, Greece \and \url{https://algo.math.uoi.gr/marialena}}{m.e.pavlidi@uoi.gr}{https://orcid.org/0009-0009-4500-0112}{}

\authorrunning{M.\ A.\ Bekos, E.\ Katsanou, P.\ Kindermann, M.\ E.\ Pavlidi}

\Copyright{Michael A.\ Bekos, Eleni Katsanou, Philipp Kindermann, Maria Eleni Pavlidi}

\acknowledgements{This work was partially funded by the IKYDA 2025 funding program of the DAAD (project 57775422 ``Algorithms and Combinatorics for Drawing Graphs with Few Slopes'').}

\keywords{$k$-bend planar drawings, planar slope number, area requirements}

\ccsdesc[500]{Mathematics of computing~Combinatorics}
\ccsdesc[500]{Mathematics of computing~Graph theory}
\ccsdesc[500]{Human-centered computing~Graph drawings}

\hideLIPIcs

\begin{document}

\maketitle

\begin{abstract}
\begin{textblock*}{10cm}(16cm,5.95cm)
\includegraphics[width=0.35\textwidth]{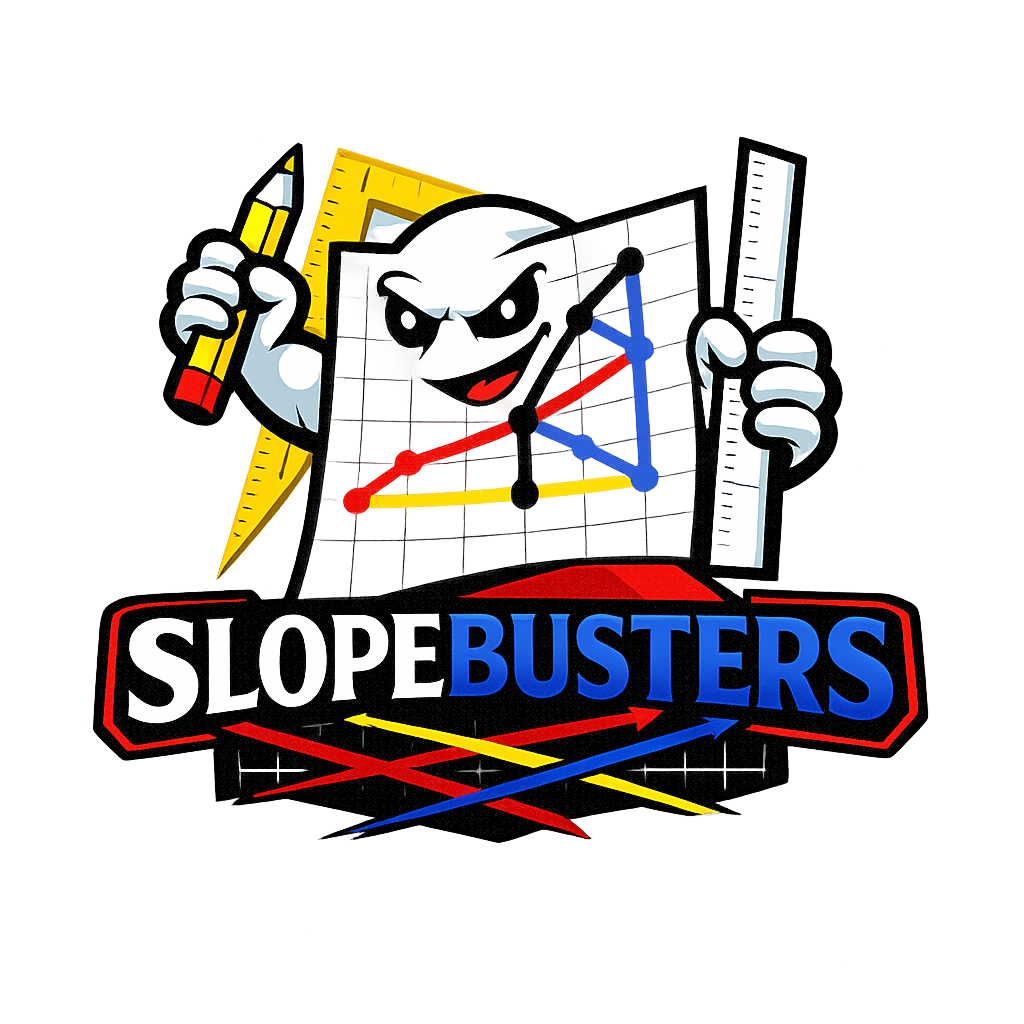}
\end{textblock*}%
In this work, we study the interplay between the number of slopes, the number of bends per edge, and the area requirements for planar drawings of bounded-degree graphs. Our motivation stems from the fact that, while numerous algorithms produce planar drawings with few slopes for graphs of relatively small degree in polynomial area, existing approaches for higher-degree graphs often require super-polynomial area. We address this gap in the literature by presenting new constructions that yield polynomial-area drawings with few bends per edge while slightly increasing the required number of slopes, thereby providing the first systematic study of slopes, bends and area trade-offs. 		
\end{abstract}

\section{Introduction}

Producing drawings of graphs while minimizing the number of slopes is a well-studied problem in Graph Drawing, motivated by both theoretical interest and practical applications such as network visualization and VLSI design \cite{BattistaETT99,Juenger04,2013gd}. 
From an algorithmic perspective, given a graph, the goal is to determine its slope number, that is, the minimum number of pairwise distinct slopes used by its edge segments, taken over all polyline drawings of the graph with a prescribed number of bends per edge.
Wade and Chu \cite{WadeChu1994} were among the first to investigate this problem by showing that the slope number of the complete graph $K_n$ in the straight-line setting is $n$. 
The problem has been extensively studied since then; several results are known for general graphs of bounded degree \cite{DujmovicSW07, KeszeghPPT08, MukkamalaP11, MukkamalaSzegedy2009, PachP06}, and beyond-planar graphs \cite{GiacomoLM15, KindermannMSS21}. 

When the input graph is planar, the output drawing is additionally required to be crossing-free.
The problem of finding planar drawings with few slopes has also been extensively studied~\cite{BekosGDLM22, KlawitterM22, KlawitterZ23,ChaplickLGLM24, GiacomoLM18, GiacomoLM20, DujmovicESW07, JelinekJKLTV13, KPP2013, KnauerMW14, LenhartLMN23, BrucknerKM22}, as it traces back to orthogonal graph drawing~\cite{GargT01,CornelsenK12,Tamassia1987}, where edges are represented as polygonal chains consisting of alternating horizontal and vertical segments, that is, using only two slopes.

In this context, a central result by Biedl and Kant \cite{BiedlKant1998} guarantees that every planar graph of maximum degree~$4$, except for the octahedron, admits an orthogonal drawing on an $n\times n$ grid in which each edge has at most two bends. This result is tight in the sense that there exist planar graphs of maximum degree~$4$ that do not admit planar orthogonal drawings with one bend per edge~\cite{Tamassia1987}.

A natural extension of the orthogonal drawing model is the octilinear, which additionally supports diagonal segments at $\pm 45^{\circ}$, yielding a total of four slopes. 
In this model, every planar graph with maximum degree at most $3$ admits a bendless planar drawing~\cite{GiacomoLM18} on a $O(n)\times O(n)$ grid, while every planar graph with maximum degree at most $4$ (and $5$, respectively) admits a planar drawing with at most one bend per edge on a $O(n^2)\times O(n)$ grid (and a super-polynomial grid, respectively)~\cite{BekosGKK2015}. 

For graphs of higher degree, Keszegh, Pach, and Pálvölgyi \cite{KPP2013} extended the algorithm of Biedl and Kant \cite{BiedlKant1998} and showed that every planar graph of maximum degree $\Delta \geq 3$, with the exception of the octahedron, admits a planar drawing with at most two bends per edge using segments of at most $\lceil \Delta/2 \rceil$ distinct equidistant slopes; this bound on the number of slopes is clearly optimal.  
Improving previous related results~\cite{KPP2013,KnauerW16}, Angelini, Bekos, Liotta, and Montecchiani~\cite{AngeliniBLM19} demonstrated that every planar graph of maximum degree $\Delta \geq 4$ admits a planar drawing with at most one bend per edge using segments from any arbitrary set of $\Delta - 1$ pairwise distinct slopes.

In the straight-line setting, Keszegh, Pach, and Pálvölgyi~\cite{KPP2013} showed that every planar graph with maximum degree $\Delta$ admits a straight-line drawing using segments of $2^{O(\Delta)}$ distinct slopes.

\medskip\noindent\textbf{Our contribution:} In this work, we identify and close a critical gap in the literature. 
While almost all aforementioned algorithms for planar graphs of small fixed degree produce drawings of polynomial area~\cite{BekosGKK2015,BiedlKant1998,GargT01,GiacomoLM18,Tamassia1987} (and are therefore practically applicable), the corresponding algorithms that have been proposed for planar graphs of higher maximum degree require super-polynomial area~\cite{AngeliniBLM19,KPP2013,KnauerW16}, which limits significantly their practical applicability.  
Thus, in this paper, we study for first time the interplay between the number of slopes, the number of bends per edge, and the corresponding area requirements; see~\Cref{tb:results}. Our focus is on algorithms that produce planar drawings with few bends per edge on polynomial-size grids while only slightly increasing the number of slopes used.  More precisely: 

\begin{table}[t!]
\caption{Summary of our results for planar graph drawings with at most $k$ bends per edge.}
\label{tb:results}
\centering
\renewcommand{\arraystretch}{1.25}
\small
\begin{tabular}{@{} c c c r@{$\,\times\,$}l c r @{}}
\toprule
\textbf{$k$} &
\textbf{Degree} &
\textbf{Connectivity} &
\multicolumn{2}{c}{\textbf{Drawing area}} &
\textbf{No.\ of Slopes} &
\textbf{Ref.} \\
\midrule
1 & $\Delta\ge 5$ & 3 &
$\mathcal{O}(\Delta n^2)$&$\mathcal{O}(\Delta n^3)$ &
$3\Delta - 8$ & \cref{th:degree-delta-1-bend} \\

1 & $\Delta\ge 5$ & -- &
$\mathcal{O}(\Delta n^2)$&$\mathcal{O}(\Delta n^3)$ &
$\lceil \tfrac{9}{2}\Delta \rceil +1$  & \cref{th:degree-delta-1-bend2} \\

1 & 5 & 3 &
$\mathcal{O}(n^3)$&$\mathcal{O}(n^4)$ & 5 & \cref{thm:1-bend_5-degree}
 \\

2 & $\Delta\ge 3$ & 2 &
$\mathcal{O}(n)$&$\mathcal{O}(\Delta n^2)$ &
$\lceil \tfrac{\Delta}{2} \rceil$ & \cref{thm:biedl-kant-biconnected} \\

2 & $\Delta$ & -- &
$\mathcal{O}(n)$&$\mathcal{O}(\Delta n^2)$ &
$\lceil \tfrac{\Delta}{2} \rceil+1$ & \cref{cor:biedl-kant-simply} \\

4 & $\Delta$ & -- &
$\mathcal{O}(n)$&$\mathcal{O}(n)$ &
$\Delta$ & \cref{thm:kaufmann-wiese}  \\

\bottomrule
\end{tabular}
\end{table}

\begin{itemize}
    \item In \Cref{sec:1-bend}, we prove that every 3-connected $n$-vertex planar graph of maximum degree~$\Delta$ admits a  planar grid drawing with at most one bend per edge, using at most $3\Delta - 8$ slopes on a $O(\Delta n^2)~\times~O(\Delta n^3)$ grid. Our approach builds upon the incremental construction by Angelini, Bekos, Liotta and Montecchiani~\cite{AngeliniBLM19}, which produces such drawings on \textit{arbitrary} sets of $\Delta-1$ slopes. In contrast, we fix the slope set in advance and increase its size from $\Delta - 1$ to $3\Delta - 8$ in order to guarantee polynomial area. As a consequence, for general planar graphs (i.e., not necessarily $3$-connected) the number of slopes becomes $\lceil \frac{9}{2}\Delta\rceil + 1$.
    \item For the special case of planar graphs of maximum degree~$5$, 
    we decrease the number of required slopes from $3\Delta - 8=7$ to $5$ at the cost of increasing the drawing area by a factor of $O(n^2)$; see \Cref{thm:1-bend_5-degree}. Compared with the best-known algorithm in~\cite{BekosGKK2015}, our construction achieves polynomial-area while increasing the number of slopes by one.
    \item In \Cref{sec:2-bend}, we prove that every planar graph $G$ of maximum degree $\Delta \ge 3$ admits a 2-bend planar drawing on a $O(n) \times O(\Delta n^2)$ grid using at most $\lceil \Delta/2 \rceil$ slopes if $G$ is biconnected, and at most $\lceil \Delta/2 \rceil+1$ slopes otherwise. In contrast to the algorithm by Keszegh, Pach, and Pálvölgyi~\cite{KPP2013}, which uses \emph{equidistant} slopes to support rotations and scalings of biconnected components around cut vertices, our algorithm uses a different slope set to guarantee polynomial-drawing area. However, it requires one additional slope for general (i.e., non-biconnected) planar graphs, as it can rely neither on rotations (because of the non-equidistant slopes) nor on scaling (because of the area requirement).
    \item In \Cref{sec:4-bend}, we prove that, regardless of the maximum degree $\Delta$ of the input planar graph, quadratic area in the number of vertices of the graph suffices to obtain a planar drawing with at most $\Delta$ slopes, where each edge has at most four bends. If the graph is additionally subhamiltonian, then three bends suffice.
\end{itemize}
Note that a fundamental requirement of our grid drawings is that both vertices and edge bends lie on grid points. Furthermore, we stress that the algorithms that we present in this work can be implemented to run in time linear in the size of the input graphs.

\section{Preliminaries}\label{sec:preliminaries}

In this section, we introduce preliminary definitions and notation used throughout the paper. Unless stated otherwise, all graphs considered are simple and undirected. The \emph{degree} of a vertex is the number of its neighbors. A graph has \emph{maximum degree}~$\Delta$ if it contains a vertex of degree~$\Delta$ and no vertex of degree greater than~$\Delta$. A graph is \emph{connected} if every pair of vertices is joined by a path. More generally, for $k \ge 1$, a graph is \emph{$k$-connected} if the removal of any set of at most $k-1$ vertices leaves the graph connected. In particular, $2$- and $3$-connected graphs are also referred to as \emph{biconnected} and \emph{triconnected}, respectively.

A \emph{drawing} of a graph maps each vertex of the graph to a point of the Euclidean plane and each of its edges to a Jordan arc connecting its endpoints. A drawing is \emph{planar} if no two edges intersect except possibly at common endpoints. Such a drawing partitions the plane into connected regions called \emph{faces}; the unbounded one is the \emph{outer face}. A graph is \emph{planar} if it admits a planar drawing. A \emph{planar embedding} of a planar graph is an equivalence class of planar drawings that define the same set of faces and the same outer face. A planar drawing is \emph{$k$-bend} if each of its edges is a polygonal chain composed of at most $k+1$ straight-line segments. The point where two such segments meet is called a \emph{bend}. Unless otherwise specified, we consider \emph{grid drawings}, that is, drawings in which each vertex and each bend lies on a point of the Euclidean plane with integer coordinates. Given such a drawing $\Gamma$, we denote by $W(\Gamma)$ and by $H(\Gamma)$ the width and the height of the minimum rectangle enclosing $\Gamma$, respectively.

The \emph{slope} of a line measures its steepness and direction. 
It is defined as the ratio of the vertical change (rise) to the horizontal change (run) between any two points on the line. 
This equivalently corresponds to the tangent of the counter-clockwise angle through which a horizontal line must be rotated to coincide with the given line. 
The \emph{horizontal} (\emph{vertical}) \emph{slope} is the slope of a line parallel (perpendicular) to the $x$-axis. 
The slope of an edge segment is the slope of the line containing it. 
Given a set of slopes $S$, a $k$-bend planar drawing is said to be \emph{on $S$} if each of its edge segments has a slope belonging to $S$. 
For a vertex $v$ in a $k$-bend planar drawing, each slope $s$ of $S$ determines two distinct rays emanating from $v$ with slope~$s$, which we call \emph{ports}. 
If $s$ is the horizontal slope, these rays are called \emph{horizontal}. 
The upward (downward) directed ports are called \emph{top} (\emph{bottom}) ports. 
We say that a port~$\rho_v$ incident to $v$ is \emph{free} if no edge incident to $v$ is drawn along $\rho_v$; otherwise,~$\rho_v$ is \emph{occupied}. 

Given an edge $e$ at the outer face of a $k$-bend planar drawing $\Gamma$, a \emph{cut at edge $e$} is a strictly $y$-monotone curve that
\begin{enumerate*}[label=(\roman*)]
\item starts at a point on a horizontal segment of $e$,
\item ends at a point on a horizontal segment of an edge $e'$ incident to the outer face of $\Gamma$, with $e' \neq e$, and
\item intersects only horizontal segments of $\Gamma$; see \Cref{fig:schnyder}.
\end{enumerate*}
Such a cut allows to \emph{stretch}~$\Gamma$ horizontally by translating all vertices and edges on one side of the cut horizontally by an arbitrary distance $d>0$, thereby increasing the horizontal distance between the two resulting parts without introducing crossings. Since the stretching is purely horizontal, the slopes of all non-horizontal segments remain unchanged, while the lengths of the horizontal segments crossed by the cut increase. Hence, if $\Gamma$ is on $S$ before the stretching, it remains on $S$ afterward. When we say that we \emph{stretch an edge} $e$, we refer precisely to this operation.

Let $G$ be a $3$-connected $n$-vertex plane graph and let $\Pi = (P_0,\ldots,P_m)$ be a partition of its vertex set into paths such that $P_0=\{v_1,v_2\}$, $P_m=\{v_n\}$, the edges $(v_1,v_2)$ and $(v_1,v_n)$ exist and belong to the outer face of $G$. 
For $i=0,\ldots,m$, let $G_i$ be the subgraph induced by $P_0\cup\ldots \cup P_i$ and denote by $C_i$ the contour of $G_k$ defined as follows:  If $i=0$, then $C_0$ is the edge $(v_1,v_2)$ of $P_0$, while if $i >0$, then $C_i$ is the path from $v_1$ to $v_2$ obtained by removing $(v_1,v_2)$ from the cycle delimiting the outer face of $G_i$. We say that $\Pi$ is a \emph{canonical order}~\cite{FraysseixPP90,Kant96} of $G$ if for each $i=1,\ldots,m-1$ the following hold (see~\Cref{fig:schnyder}): %
\begin{enumerate*}[label=(\roman*)]
\item $G_i$ is biconnected, internally $3$-connected and embedded with $C_i \cup \{(v_1,v_2)\}$ as its outer face; 
\item all neighbors of $P_i$ in $G_{i-1}$ are on $C_{i-1}$; 
\item $P_i$ either consists of a single vertex (called \emph{singleton}), or the degree of each of its vertices is $2$ in $G_i$ (called \emph{chain});
\item every vertex in $P_i$ has at least one neighbor in $P_j$ with $j>i$.
\end{enumerate*}
A canonical order of a $3$-connected planar graph can be computed in linear time~\cite{Kant96}.

\begin{figure}
    \centering
    \includegraphics[page=2]{schnyder}
    \caption{Illustration of a canonical order and of a $4$-coloring of a $3$-connected planar graph. The dotted curve is a cut (dashed) at edge $e$.}
    \label{fig:schnyder}
\end{figure}

Given a 3-connected plane graph $G$ and a canonical order $\Pi$ of it, it is possible to compute a 4-edge-coloring of $G$ similar to the one by Schnyder~\cite{Felsner04,Schnyder90}. The edge $(v_1,v_2)$ of~$G_0$ is colored black. For $i = 1,\ldots,m$, a 4-coloring of $G_{i-1}$ is extended to one of $G_i$ as follows (see, e.g, \Cref{fig:schnyder,fig:singleton}).
First, consider the edges of $G_i$ that do not belong to $G_{i-1}$ and lie on the contour $C_i$. The first (last) such edge encountered on a traversal of $C_i$ from $v_1$ to $v_2$ is colored blue (green, respectively), while all remaining ones (i.e., those whose endpoints both belong to $P_i$, when $P_i$ is a chain) are colored black. The remaining edges of $G_i$ that do not belong to $G_{i-1}$ are colored red; these are precisely the edges incident to~$P_i$ that are not part of $C_i$ (i.e., when $P_i$ is a singleton of the canonical order). 
Finally, we treat all black edges as undirected, and all remaining edges as directed where the orientation of an edge connecting a vertex $u \in P_i$ to a vertex $v \in P_j$ with $0 \leq i < j \leq m$ is from $u$ to $v$.

For an $n$-vertex graph with two designated vertices $s$ and $t$, an \emph{$st$-ordering} $v_1,\ldots, v_n$ is a permutation of its vertices such that $v_1=s$ and $v_n=t$, and every vertex $v_j$ with $1 < j < n$ has at least two neighbors $v_i$ and $v_k$ with $i<j<k$. It is known that every biconnected graph admits such an $st$-ordering~\cite{RosenstiehlT86}. In the case of a planar input, one may additionally guarantee that the vertices $v_1,v_2,v_n$ all lie on the outer face of the graph, such that the edge $(v_1,v_2)$ is incident to this face~\cite{BiedlKant1998}. 

In our algorithms, we occasionally augment the input graph to make it biconnected or triconnected. To augment a connected planar graph $G$ to a simply biconnected planar graph~$G'$, we utilize an algorithm by Kant and Bodlaender~\cite{KantBodlaender1991}. This algorithm runs in linear time and ensures that the maximum degree $\Delta(G')$ of the resulting graph satisfies $\Delta(G') \le \Delta(G) + 2$.
Furthermore, we utilize an algorithm by Kant~\cite{kant1993thesis} to augment a biconnected planar graph $G$ to a triconnected planar graph $G'$. This augmentation is performed in linear time, and the resulting maximum degree $\Delta(G')$ is bounded by $\Delta(G') \leq \max \{ 2, \lceil \frac{3}{2} \Delta(G) \rceil \}$.

\section{1-bend planar drawings of 3-connected planar graphs}\label{sec:general}

In this section, we consider $3$-connected planar graphs. We show how to obtain $1$-bend planar drawings with few slopes on a polynomially sized grid. We first present an algorithm for graphs with maximum degree $\Delta$, and afterwards we reduce the number of slopes for the special case of $\Delta = 5$. 

\subsection{\texorpdfstring{1-bend planar drawings of 3-connected degree-$\Delta$ planar graphs}{1-bend planar drawings of 3-connected degree-Δ planar graphs}}
\label{sec:1-bend}

We seek to prove that every 3-connected $n$-vertex planar graph with maximum degree~$\Delta$ admits a $1$-bend planar grid drawing with at most $3\Delta - 8$ slopes on a $O(\Delta n^2)~\times~O(\Delta n^3)$ grid. Since for $\Delta \leq 4$ our results is superseded by \cite{BekosGKK2015,DiGiacomoLM2014}, we assume w.l.o.g.\ that $\Delta \geq 5$. 
Our approach builds upon the incremental construction by Angelini, Bekos, Liotta and Montecchiani~\cite{AngeliniBLM19}, which uses a canonical ordering of the input $3$-connected planar graph $G$ to produce a $1$-bend planar drawing $\Gamma$ of it on an \textit{arbitrary} set of $\Delta-1$ slopes. In contrast, we fix the slope set in advance. Moreover, to obtain polynomial area, we enlarge the number of available slopes from $\Delta - 1$ to $3\Delta - 8$.

\begin{theorem}\label{th:degree-delta-1-bend}
Every 3-connected planar $n$-vertex graph $G$ with maximum degree $\Delta\ge 5$ admits a $1$-bend planar grid drawing with at most $3\Delta - 8$ slopes on a  $12\Delta n^2 \times 18\Delta n^3$ grid.
\end{theorem}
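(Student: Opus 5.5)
We follow the incremental canonical-order approach of Angelini et al.~\cite{AngeliniBLM19}, but fix the slope set in advance so that every slope is a small rational with grid-friendly direction vectors. Concretely, we take the horizontal slope, the vertical slope, and $3\Delta-10$ further slopes of the form $\pm 1/j$ and $\pm j$ for small integers $j=O(\Delta)$. This gives direction vectors $(j,\pm1)$ and $(1,\pm j)$ with integer coordinates of size $O(\Delta)$.

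We compute a canonical order $\Pi=(P_0,\dots,P_m)$ of $G$ together with the 4-coloring described in \Cref{sec:preliminaries}. The drawing $\Gamma_i$ of $G_i$ is built inductively, maintaining the following invariants:
\begin{enumerate*}[label=(\roman*)]
\item $C_i$ is drawn $x$-monotone;
\item every edge of $C_i$ contains a horizontal segment, so a cut exists through any contour edge;
\item every vertex $v$ on $C_i$ has a prescribed set of free top ports, enough for its remaining neighbours in later paths. Since $v$ has at most $\Delta-2$ such neighbours (it already has at least two neighbours in $G_i$), roughly $\Delta-1$ top slopes suffice.
\end{enumerate*}

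Blue and green edges enter a new path $P_i$ from its leftmost and rightmost contour neighbours. They use a top port at the old vertex and a horizontal segment into the new vertex, with the bend placed on the grid. This is what keeps invariant (ii). Red edges of a singleton $v_k$ are routed with one bend: a segment leaving the old vertex along a free top port, followed by a segment entering $v_k$ along one of its bottom ports. The three-fold blow-up of the slope set from $\Delta-1$ to about $3\Delta$ exists precisely to decouple choices. The bottom ports of $v_k$, the top ports used by contour vertices, and the blue/green edge slopes are drawn from disjoint subfamilies. As a result, the two segments of each red edge meet at a grid point whose position is determined by a single integer linear equation with small coefficients, rather than requiring arbitrary real slopes as in \cite{AngeliniBLM19}.

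To place $P_i$, we first stretch $\Gamma_{i-1}$ by cuts at contour edges between consecutive neighbours of $P_i$. The stretch amount is chosen so that all rays along the chosen top ports of the neighbours meet the rays of the chosen bottom ports of $v_k$ above the contour, without crossing each other or the existing drawing. We then put $v_k$ high enough, at height $O(\Delta n)$ above the current contour maximum.

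Area is accounted for as follows. Each step stretches the width by $O(\Delta n)$, which gives width $O(\Delta n^2)$ over $n$ steps. Each step raises the height by at most $O(\Delta)$ times the current width divided by a slope ratio, which gives height $O(\Delta n^3)$. Tracking the constants should yield $12\Delta n^2\times 18\Delta n^3$.

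\textbf{Main obstacle.} The hardest step is showing that a stretch of only $O(\Delta n)$ per step suffices. The red edges of $v_k$ originate at contour vertices of different heights, yet must reach $v_k$ along distinct bottom ports in the right cyclic order and with integral bend points. I would handle this by routing each red edge first along a steep top port of the contour vertex up to a common height just below $v_k$, then with a shallow segment into $v_k$. The intersection points of steep and shallow directions with integer coordinates land on the grid, with the order guaranteed by $x$-monotonicity of $C_{i-1}$. The remaining work is to verify that the horizontal gaps forced by the shallow slopes are $O(\Delta)$ per red edge, hence $O(\Delta n)$ per step.
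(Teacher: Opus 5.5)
Your outline has the right skeleton: canonical order, cuts through horizontal contour segments, and blue/green edges drawn steep-then-horizontal. But the area bound does not follow, because of your slope set. All your non-axis slopes are $\pm j$ or $\pm 1/j$ with $j=O(\Delta)$, so the steepest non-vertical slope is only $O(\Delta)$.

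A blue or green edge leaves a contour vertex along a steep port and must climb from that vertex's height, possibly $0$, to above the current height $H_{i-1}$. In your routing every red edge does the same. Its horizontal drift is therefore $\Omega(H_{i-1}/\Delta)$, and this is exactly how far the cut next to that vertex must be stretched. With $H_{i-1}=\Theta(\Delta n^3)$ this is $\Theta(n^3)$ per step, not $O(\Delta n)$.

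Conversely, a shallow segment of slope at least $1/O(\Delta)$ covering a horizontal distance up to $W_{i-1}$ raises $v_k$ by $\Omega(W_{i-1}/\Delta)$. So $v_k$ cannot sit only $O(\Delta n)$ above the contour either. With steepness $\sigma$ and shallow slope $\phi$, the coupled recurrence is roughly $W_i\approx W_{i-1}+H_{i-1}/\sigma$ and $H_i\approx H_{i-1}+\phi W_{i-1}$. Its per-step growth factor is $1+\Theta(\sqrt{\phi/\sigma})$. For $\sigma=O(\Delta)$ and $\phi\ge 1/O(\Delta)$ this is $1+\Omega(1/\Delta)$, which compounds to exponential area — exactly the behaviour of \cite{AngeliniBLM19} that the theorem is meant to avoid. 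Your ``main obstacle'' paragraph looks at the wrong quantity: the problem is not gaps forced by shallow slopes but the drift of steep segments through the whole current height.

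The missing idea is to let the steep slopes grow with $n$. The paper uses $\pm k/j$ for $j=1,\dots,\Delta-3$ with $k=4\Delta n^2$, and keeps every vertex at a $y$-coordinate that is a multiple of $k$. A steep segment from $v_\ell$ to the horizontal line through $v_g$ then ends at the integer offset $j\,(y(v_g)-y(v_\ell))/k$. Its drift is at most $(\Delta-3)(H_{i-1}+W_{i-1}+k)/k=O(\Delta n)$.

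Red edges leave their source vertically, so they have no drift at all. They enter $v_g$ with a flat slope $j/(\Delta-3)$, and $v_g$ is placed directly above its last red neighbour $w_q$. These flat bends are made integral by stretching each gap by at most $\Delta-4$, so that horizontal distances become multiples of $\Delta-3$. The height then grows by at most $W_{i-1}+k$ per step. The feedback factor $(1+2\Delta/k)^n\le e^{1/(2n)}$ is a constant, which yields the $O(\Delta n^2)\times O(\Delta n^3)$ grid.

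Three further points in your sketch also need repair:
\begin{enumerate}
\item Lines through grid points with integer direction vectors in general meet only at rational points. For example, from $(0,0)$ along $(1,2)$ and from $(3,0)$ along $(-2,1)$ you reach $(3/5,6/5)$. So your bends do not land on the grid automatically.
\item With steep slopes $j_\ell\neq j_r$ at the two ends of $P_i$, the $y$-coordinate of $P_i$ must satisfy congruences modulo $j_\ell$ and $j_r$ that can be incompatible. The multiple-of-$k$ invariant removes both this and the previous problem.
\item You never say how your $3\Delta-10$ slopes are split among roles. The count $3\Delta-8$ arises from $\Delta-3$ steep slopes on each side, $\Delta-4$ flat slopes, and the two axis slopes. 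Both steep families are needed because a contour vertex does not know in advance whether its later edges will be blue, green, or red.
\end{enumerate}
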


\begin{proof}
The slope set $S$ used by our algorithm is defined with respect to a parameter $k > \Delta n^2$ that we will specify later, and is the union of the following sets.
\begin{itemize*}
    \slopeitem{S_{\mathrm{v}}}{s:vert}consists only of the vertical slope.
    \slopeitem{S_{\mathrm{h}}}{s:hor}consists only of the horizontal slope.
    \slopeitem{S_{\mathrm{ls}}}{s:green}consists of $\Delta-3$ \emph{left steep} slopes $-\frac{k}{1},\ldots,-\frac{k}{\Delta-3}$ 
    (green in~\Cref{fig:slopes}).
    \slopeitem{S_{\mathrm{rs}}}{s:blue}consists of $\Delta-3$ \emph{right steep} slopes $\frac{k}{1},\ldots,\frac{k}{\Delta-3}$
    (blue in~\Cref{fig:slopes}). Finally,
    \slopeitem{S_{\mathrm{f}}}{s:red}consists of $\Delta-4$ \emph{flat} slopes 
    $\frac{1}{\Delta-3},\ldots,\frac{\Delta-4}{\Delta-3}$
    (red in~\Cref{fig:slopes}).
\end{itemize*}
Hence, the cardinality of $S$ is $3\Delta - 8$, as desired.

Let $\Pi = (P_0, \ldots, P_m)$ be a canonical order of $G$. For the drawing algorithm, we first remove the edge $(v_1,v_2)$ from the graph and color the edges incident to $v_1$ and $v_2$ in $G_1$ black. After the drawing algorithm is completed, we will insert the edge $(v_1,v_2)$ below the constructed drawing with a vertical segment incident to $v_1$ and a segment with slope $\frac{1}{\Delta-3}\in$\;\ref{s:red} incident to $v_2$ (see \Cref{fig:1-bend-drawing}).
For now, we also assume that $v_n$ has degree strictly less than $\Delta$ (note that if $\Delta\ge 6$, then this assumption is w.l.o.g.\, since we can choose $v_n$ such that its degree is at most $5$). We will describe how to handle the other case later.

Assume that for $0<i\leq m$, we have already constructed a $1$-bend planar grid drawing $\Gamma_{i-1}$ of $G_{i-1}$ on $S$ that satisfies the following invariants. 
\begin{enumerate}[label=I.\arabic*]     
    \item\label{inv:xmonotone} The contour $C_{i-1}$ of $\Gamma_{i-1}$ is drawn strictly $x$-monotone. 

    \item\label{inv:cut} There exists a cut at every edge belonging to the contour $C_{i-1}$ of $\Gamma_{i-1}$. 

    \item\label{inv:ports} Every vertex of $C_{i-1}$ has at least as many unoccupied top ports in each of \ref{s:blue} and \ref{s:green} incident to the outer face of $\Gamma_{i-1}$ as it has neighbors in $G \setminus G_{i-1}$ minus one, and the port corresponding to \ref{s:vert} is unoccupied if the vertex has at least one neighbor in $G \setminus G_{i-1}$ or its degree is strictly less than $\Delta$.
    
    \item\label{inv:ycoord} All vertices of $G_{i-1}$ are at $y$-coordinates that are multiples of $k$ in $\Gamma_{i-1}$.

    \item\label{inv:edge-colors} Based on their colors, the edges of $G_{i-1}$ have been drawn as follows in $\Gamma_{i-1}$:
    \begin{enumerate}[label=\alph*.,ref=I.5.\alph*] 
        \item\label{inv:black} Each black edge of $G_{i-1}$ consists of a single horizontal segment (i.e., its slope is in \ref{s:hor}).
        \item\label{inv:blue} Each blue edge of $G_{i-1}$ consists of two segments; the one incident to its source has a slope in \ref{s:vert}$\;\cup\;$\ref{s:blue}, while the one incident to its target is in~\ref{s:hor}. 
        \item \label{inv:red} Each red edge of $G_{i-1}$ consists of at most two segments; the one incident to its source is in \ref{s:vert}, while the one incident to its target has a slope in \ref{s:red}$\;\cup\;$\ref{s:vert}. If the slope of the second segment is in \ref{s:vert}, then the edge consists of one segment. 
        \item\label{inv:green} Each green edge of $G_{i-1}$ consists of two segments; the one incident to its source has a slope in \ref{s:vert}$\;\cup\;$\ref{s:green}, while the one incident to its target is in \ref{s:hor}. 
    \end{enumerate}    
\end{enumerate}

\begin{figure}[t]
    \centering
    \begin{subfigure}[t]{0.45\linewidth}
        \centering
    \includegraphics[page=1]{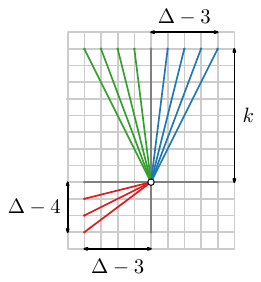}
       \caption{}
       \label{fig:slopes}
    \end{subfigure}\hfill
    \begin{subfigure}[t]{0.45\linewidth}
        \centering
        \includegraphics[page=3]{schnyder}
        \caption{}
        \label{fig:1-bend-drawing}
    \end{subfigure}
    \caption{(a) Illustration of the slopes used in \Cref{th:degree-delta-1-bend}. (b) The drawing created by the algorithm in \Cref{th:degree-delta-1-bend} with $k=4$ for the graph in \Cref{fig:schnyder}.}
\end{figure}

Note that Invariants~\ref{inv:xmonotone} and~\ref{inv:cut} are inherited from~\cite{AngeliniBLM19}; Invariant~\ref{inv:ports} is adapted to our setting, and Invariants~\ref{inv:ycoord} and~\ref{inv:edge-colors} are specific to our construction. 
The base case of our recursive algorithm is the graph $G_1$, which consists of the vertices $v_1$, $v_2$, and all vertices in $P_1$. We place these vertices on a horizontal line: $v_1$ is positioned at $(0,0)$, $v_2$ at $(|P_1|+1,0)$, and the vertices of $P_1$ are placed between them at unit distance. All edges are then drawn as horizontal segments. This obviously satisfies the invariants. 
So, we may assume that $i>1$. To derive a drawing $\Gamma_i$ of $G_i$ maintaining Invariants~\ref{inv:xmonotone}--\ref{inv:edge-colors}, we introduce the vertices of path $P_i$ into the drawing $\Gamma_{i-1}$ by distinguish two cases depending on whether $P_i$ is a chain or a singleton of degree 2 (Case 1) and a singleton of degree greater than 2 (Case 2). At a high level, Invariant \ref{inv:ycoord} will allow us to determine a grid point for each vertex of $P_i$ and for each bend of the blue and green edges incident to it, while Invariant \ref{inv:cut} will support stretching the drawing $\Gamma_{i-1}$ to create additional space to accomplish this placement, if needed. 

\subsection*{Case 1: $P_i$ is a chain or a singleton of degree~2}
Suppose that $P_i = \{v_g,\ldots,v_h\}$ is a chain or a singleton of degree $2$ in $G_{i}$. Note that in the latter case, $g=h$ holds. Let $v_\ell$ and $v_r$ be the neighbors of $v_g$ and $v_h$ in $G_{i-1}$, respectively. Refer to \Cref{fig:singleton-deg-2}. 
W.l.o.g., we may assume that $v_\ell$ appears before $v_r$ along the contour $C_{i-1}$ in a traversal of it from $v_1$ to $v_2$.
Let also $\rho_\ell$ ($\rho_r$) be the first unoccupied port at $v_\ell$ ($v_r$) encountered in a counter-clockwise (clockwise) traversal of its top ports in \ref{s:blue}\;$\cup$\;\ref{s:vert} (\ref{s:green}\;$\cup$\;\ref{s:vert}) starting from the rightward (leftward) horizontal port, which exist by Invariant~\ref{inv:ports}.
To satisfy Invariant~\ref{inv:ycoord}, we set the $y$-coordinate of each of $v_g,\ldots,v_h$ to $H(\Gamma_{i-1}) + k$. The $x$-coordinates of $v_g,\ldots,v_h$ will be determined by the constraints arising from the way that $(v_\ell, v_g), (v_g,v_{g+1}),\ldots, (v_{h-1},v_h), (v_h, v_r)$ must be drawn.
Since $(v_\ell, v_g)$ and $(v_r, v_h)$ are incoming edges to $v_g$ and $v_h$ in $G_i$, we satisfy Invariants~\ref{inv:blue} and \ref{inv:green} by drawing $(v_\ell, v_g)$ and $(v_r, v_h)$ with a horizontal segment incident to $v_g$ and $v_h$ and a second segment attached to $\rho_\ell$ and $\rho_r$, respectively, which maintains Invariant~\ref{inv:ports} for $v_\ell$ and $v_r$. The remaining edges of $P_i$ (if any) will be drawn as unit-length horizontal segments satisfying Invariant~\ref{inv:black}. Each of the vertices $v_g,\ldots,v_h$ has at most $\Delta-2$ neighbors in $G\setminus G_{i}$ and all $\Delta-3$ ports unoccupied in each of \ref{s:blue} and \ref{s:green} that are incident to the outer face of $\Gamma_i$, which ensures Invariant~\ref{inv:ports} for them.

\begin{figure}[t]
    \centering
    \begin{subfigure}[t]{0.45\linewidth}
        \centering
        \includegraphics[page=2,width=\linewidth]{slopes.pdf}
        \caption{}
        \label{fig:singleton-deg-2}
    \end{subfigure}\hfill
    \begin{subfigure}[t]{0.45\linewidth}
        \centering
        \includegraphics[page=3,width=\linewidth]{slopes.pdf}
        \caption{}
        \label{fig:singleton-deg-gt2}
    \end{subfigure}
    \caption{Illustration of the case where $P_i$ is (a)~a chain or a singleton of degree $2$, and (b)~a singleton of degree greater than $2$ in \Cref{th:degree-delta-1-bend}.}
    \label{fig:singleton}
\end{figure}
 
Let $p_\ell$ and $p_r$ be the points where the rays that correspond to $\rho_\ell$ and $\rho_r$ intersect the horizontal line $L$ through $v_g$; see \Cref{fig:badcases}. 
We have to ensure that $p_\ell$ lies at least $h-g+2$ units to the left of $p_r$ such that $v_g,\ldots,v_h$ can be placed between them in $\Gamma_i$. Furthermore, the rays at $\rho_\ell$ and $\rho_r$ are not allowed to cross any part of $\Gamma_{i-1}$. To guarantee these conditions, we horizontally stretch~$\Gamma_{i-1}$ as follows. 
Let $(v_\ell,v_{\ell'})$ be the first edge of $C_{i-1}$ that is encountered when traversing $C_{i-1}$ from $v_\ell$ to $v_2$. 
Analogously, $(v_r,v_{r'})$ is the first edge of $C_{i-1}$ that is encountered when traversing $C_{i-1}$ from $v_r$ to $v_1$.
(Note that $v_{\ell'}=v_r$ and $v_{r'}=v_\ell$ is possible.)
Since both these edges lie on $C_{i-1}$, by Invariant~\ref{inv:cut} there is a cut at each of them. 
This allows us to stretch $(v_\ell,v_{\ell'})$ until $p_\ell$ lies to the left of $v_{\ell'}$, and symmetrically, stretch $(v_r,v_{r'})$ until $p_r$ lies to the right of $v_{r'}$. 
This further guarantees that the horizontal distance between $p_\ell$ and $p_r$ is at least~$2$. If needed, we further stretch one of the edges $(v_{\ell},v_{\ell'})$ and $(v_{r'},v_{r})$, say w.l.o.g.\ the former, by up to $h-g$ additional units, so as to ensure that the horizontal distance between $p_\ell$ and $p_r$ is at least $h-g+2$. 
Hence, it is possible to place each of $v_g,\ldots,v_h$ at a grid point along $L$ that lies between $p_\ell$ and $p_r$, and draw the edges $(v_\ell,v_g)$ and $(v_r,v_h)$ with one bend each at $p_\ell$ and $p_r$, respectively. The remaining edges of $P_i$ are drawn as unit-length horizontal edge segments, as we initially sought; see~\Cref{fig:singleton}. This completes the drawing of~$\Gamma_i$.

It remains to show that $\Gamma_i$ is planar and satisfies Invariant~\ref{inv:xmonotone}. 
For the former, observe that the introduction of $v_g,\ldots,v_h$ into $\Gamma_{i-1}$ yields $h-g+4$ new edge segments in $\Gamma_i$. 
The two segments connecting $p_\ell$ with $v_g$ and $p_r$ with $v_h$, as well as all edge segments connecting internal vertices of $P_i$ (if any) lie completely above $\Gamma_{i-1}$. 
So they cannot cross any edge of $\Gamma_{i-1}$. The port $\rho_\ell$ that is used by the edge segment connecting $v_\ell$ with $p_\ell$ is the next available port in \ref{s:blue}, i.e., the one that follows the corresponding port used by the edge $(v_\ell,v_{\ell'})$. 
Furthermore, the only edge of $C_{i-1}$ that lies between the x-coordinates of $v_\ell$ and~$p_\ell$ is the edge $(v_\ell,v_{\ell'})$. 
So no edge of $C_{i-1}$ and thus no edge of $\Gamma_{i-1}$ can be crossed by the edge segment connecting $v_\ell$ and $p_\ell$. 
A symmetric argument applies to the edge segment connecting $v_r$ and $p_r$. Since these two edge segments cannot cross each other, it follows that~$\Gamma_i$ is planar, as desired. 
Having ensured this property, the fact that the contour $C_i$ of~$\Gamma_i$ is $x$-monotone is implied by the choice of $\rho_\ell$ and $\rho_r$ and by the fact that $\rho_\ell \in\;$\ref{s:blue} and $\rho_r\in\;$\ref{s:green}. 
Hence, $\Gamma_i$ satisfies Invariant~\ref{inv:xmonotone}. 

Before considering the case where $P_i$ is a singleton of degree greater than $2$ in $G_i$, we establish an upper bound on how much the drawing $\Gamma_{i-1}$ must be stretched horizontally to accommodate $v_g,\ldots,v_h$ in $\Gamma_i$.
To this end, consider the edge $(v_\ell,v_g)$. Let $x_\ell$ be the $x$-coordinate of $p_\ell$ before applying any stretching. Let also $s_\ell \in\;$\ref{s:blue} be the slope of the edge segment connecting $v_\ell$ with $p_\ell$; symmetrically, $s_r\in\;$\ref{s:green} is defined.  
It follows that $x_\ell=x(v_\ell)+\frac{y(v_g)-y(v_\ell)}{s_\ell}\le x(v_\ell)+\frac{y(v_g)}{s_\ell}$.
After the stretching of the edge $(v_\ell,v_{\ell'})$, the $x$-coordinate of $p_\ell$ should be at most $x(v_{\ell'})-1$. To achieve this,  the edge $(v_\ell,v_{\ell'})$ must be stretched by at most $x_\ell-(x(v_{\ell'})-1)\le x(v_\ell)+\frac{y(v_g)}{s_\ell}-x(v_{\ell}) =\frac{y(v_g)}{s_\ell}$ units of length. Symmetrically, the edge $(v_r,v_{r'})$ must be  stretched by at most $\frac{y(v_g)}{|s_r|}$ units of length.
Now the $x$-coordinate of $p_\ell$ is at most the old $x$-coordinate of $v_\ell$ and the $x$-coordinate of $p_r$ is at least the old $x$-coordinate of $v_r$, so $p_\ell$ and $p_r$ are at least one unit apart. To make space for $v_g,\ldots,v_h$ between them, we might have to stretch the edge $(v_\ell,v_{\ell'})$ by $h-g+1$ more units. Since $|P_i|=h-g+1$, it follows that 
the total stretch applied is at most

\begin{align}\label{eq:width-1}
\notag \frac{y(v_g)}{s_\ell} + \frac{y(v_g)}{|s_r|} + |P_i|&\le \frac{H(\Gamma_{i-1})+k}{\frac{k}{\Delta-3}} + \frac{H(\Gamma_{i-1})+k}{\frac{k}{\Delta-3}} + |P_i|\\
&=2\cdot \frac{H(\Gamma_{i-1})+k}{k}\cdot(\Delta-3)+|P_i|\le 2\Delta\cdot \frac{H(\Gamma_{i-1})+k}{k}+|P_i|.
\end{align}

\begin{figure}
    \centering
    \includegraphics[page=2]{Badcases}
    \includegraphics[page=3]{Badcases}
    \caption{Illustration of using cuts to maintain planarity in \Cref{th:degree-delta-1-bend}.}
    \label{fig:badcases}
\end{figure}

\subsection*{Case 2: $P_i$ is a singleton of degree more than 2}
Suppose that $P_i = \{v_g\}$ is a singleton of degree greater than $2$ in $G_{i}$.
Let $v_\ell,w_1,\ldots,w_q,v_r$, with $q\geq 1$, be the neighbors of $v_g$ in $\Gamma_{i-1}$ as they appear from left to right  along $C_{i-1}$. 
We will place $v_g$ above~$w_q$, such that the edge connecting them is vertical; see \Cref{fig:singleton-deg-gt2}. Equivalently, this corresponds to setting the $x$-coordinate of $v_g$ to the one of $w_q$. The $y$-coordinates of $v_g$ will be determined by the constraints arising from the way that $(v_\ell, v_g), (v_g,w_1),\ldots, (v_g,w_q), (v_g, v_r)$ must be drawn (see Invariants~\ref{inv:blue}, \ref{inv:red} and \ref{inv:green}).
More precisely, each edge $(w_j,v_g),1\le j\le q$ will be drawn with a vertical segment incident to $w_j$, which is unoccupied by Invariant~\ref{inv:ports}, and, if $j \neq q$, with a second segment of slope $s_j=\frac{j}{\Delta-4}\in\text{\ref{s:red}}$. Clearly, if $q \leq \Delta-3$, then Invariant \ref{inv:red} is satisfied. This condition always holds when $i \neq m$, or when $i = m$ and the degree of $v_n$ is strictly less than $\Delta$. We will discuss the remaining case later.
Furthermore, the edges $(v_\ell,v_g)$ and $(v_g,v_r)$ will be drawn afterwards as in Case 1, thereby satisfying Invariants~\ref{inv:cut}, \ref{inv:ports}, \ref{inv:blue}, and \ref{inv:green}.

To ensure that the bend point of each edge $(w_j,v_g),1\le j< q$ lies on a grid point, we first ensure that the horizontal distance between $w_j$ and $v_g$ is a multiple of $\Delta-3$. To this end, let $(w_j,w'_j)$ be the first edge of $C_{i-1}$ that is encountered when traversing $C_{i-1}$ from $w_j$ to $v_2$ 
(possibly $w'_j=w_{j+1}$).
We leverage Invariant~\ref{inv:cut} to stretch $(w_j,w'_j)$ by at most $\Delta-4$ units so as to guarantee the required distance between $w_j$ and $v_g$. 
Overall, this increases the width by at most $(q-1) \cdot (\Delta-4)$ units.
To satisfy Invariant~\ref{inv:ycoord}, we set the $y$-coordinate of $v_g$ to $H(\Gamma_{i-1}) + \alpha k$, where $\alpha \in \mathbb{N^*}$ is a parameter that is  chosen such that, for every neighbor $w_j$ of $v_g$, the bend point of the edge $(w_j, v_g)$ lies above $\Gamma_{i-1}$; see \Cref{fig:singleton-deg-gt2}. 
Since the height of the flat segment of the edge $(w_j, v_g)$ is $s_j\cdot(x(v_g) - x(w_j))$, where $x(v_g) - x(w_j)$ is its width, the $y$-coordinate of $v_g$ must be at least $H(\Gamma_{i-1})+ s_j \cdot (x(v_g) - x(w_j))$. 
We choose $y(v_g)$ as the smallest multiple of $k$ that fulfills all these bounds. 
Thus, we have
\begin{equation} 
\label{eq:height-2}
y(v_g) \;\le\; H(\Gamma_{i-1}) \;+\; \max\limits_{j \in \{1,\ldots,q-1\}} \{
 s_j\cdot (x(v_g) - x(w_j))\} + k
\;\le\; H(\Gamma_{i-1}) \;+\; 
 W(\Gamma_{i-1}) + k.
\end{equation}

\noindent Having determined the $x$- and $y$-coordinates of $v_g$, the drawing $\Gamma_i$ is completed by drawing each edge incident to $v_g$ in $G_i$ as described above, thereby guaranteeing Invariants~\ref{inv:cut}-\ref{inv:edge-colors}. 
By an argument analogous to that of Case~1, the edges $(v_\ell, v_g)$ and $(v_g, v_r)$ are crossing-free in $\Gamma_i$.
We still have to argue that the edges $(w_j,v_g)$ are also crossing free in $\Gamma_i$. The flat slopes of the edges $(w_j,v_g)$, with $j<q$, have been assigned in counterclockwise order around $v_g$. Also, the left-to-right order of the bend points of these edges matches the left-to-right order of $w_1,\ldots,w_{q-1}$ along the contour $C_{i-1}$; see \Cref{fig:singleton-deg-gt2}. Therefore, no two edges $(w_j,v_g)$ and $(w_{j'},v_g)$ with $0 \leq j < j' \leq q$ can cross each other.
Furthermore, the horizontal segments of the edges $(v_\ell,v_g)$ and $(v_r,v_g)$ lie above the flat segments of the edges $(w_j,v_g)$, while the stretching performed along the edge $(v_\ell,v_{\ell'})$ and $(v_r,v_{r'})$ ensures that the steep segments of $(v_\ell,v_g)$ and $(v_r,v_g)$ lie completely to the left and to the right of these flat segments; see \Cref{fig:badcases}. 
Since the flat segments of the edges $(w_i,v_g)$ are drawn completely above $\Gamma_{i-1}$, they cannot cross any edge of $G_{i-1}$, thereby implying that $\Gamma_i$ is planar, as desired. 
Having ensured this property, the fact that the contour $C_i$ of~$\Gamma_i$ is $x$-monotone is implied by the fact that the slopes of the two segments of the edge $(v_\ell,v_g)$ are in \ref{s:blue}$\;\cup\;$\ref{s:hor}, while the ones of $(v_g,v_r)$ in \ref{s:hor}$\;\cup\;$\ref{s:green} (i.e., symmetrically to Case~1). Hence,  $\Gamma_i$ also satisfies Invariant \ref{inv:xmonotone}.

We now establish an upper bound on how much the drawing $\Gamma_{i-1}$ must be stretched horizontally to accommodate $v_g$ in $\Gamma_i$.
As in Case 1, the edges $(v_{\ell},v_{\ell'})$ and $(v_r,v_{r'})$ have been stretched by at most $\frac{y(v_g)}{s_\ell}$ and $\frac{y(v_g)}{|s_r|}$ units, respectively. Furthermore, the edges $(w_j,w'_j)$ have been stretched by at most $(q-1)\cdot(\Delta-4)$ unit in total. Thus, the total stretch applied is at most
\begin{align}\label{eq:width-2}
&\notag\frac{y(v_g)}{s_\ell} + \frac{y(v_g)}{|s_r|} + (q-1)\cdot (\Delta-4) \\
&\notag\overset{(\ref{eq:height-2})}{\le} \frac{H(\Gamma_{i-1})+ W(\Gamma_{i-1}) + k}{s_\ell}+\frac{H(\Gamma_{i-1}) + W(\Gamma_{i-1}) + k}{|s_r|}+(\deg(v_g)-1)\cdot (\Delta-4)\\
&\notag\le 2\cdot \frac{H(\Gamma_{i-1})+ W(\Gamma_{i-1}) + k}{\frac{k}{ \Delta-3}}+\deg(v_g)\cdot\Delta\\
&\le 2\Delta\cdot \frac{H(\Gamma_{i-1}) + W(\Gamma_{i-1})+ k}{k}+\Delta n.
\end{align}

To complete the description of our drawing algorithm, it remains to consider the case in which $v_n$ is of degree exactly $\Delta$.
Let $w_1,\ldots,w_\Delta$ be the neighbors of $v_n$ as they appear along $C_{m-1}$ from $v_1$ to $v_2$. Before applying our drawing algorithm as described so far, we remove the edge $(w_\Delta,v_n)$ from the graph and recolor the edge $(w_{\Delta-1},v_n)$ green. After the last step of the drawing algorithm, we reinsert the edge $(w_\Delta,v_n)$. Since $w_\Delta$ lies on $C_m$ and its degree is strictly less than $\Delta$ (in the absence of $(w_{\Delta-1},v_n)$), the vertical top port of $w_\Delta$ is unoccupied by Invariant~\ref{inv:ports}.
By Invariant~\ref{inv:xmonotone}, we can draw the edge $(w_{\Delta-1},v_n)$ with a vertical segment at $w_\Delta$ and a segment of slope $\frac{1}{\Delta-3}$ at $v_n$ (see \Cref{fig:1-bend-drawing}). This completes the drawing of the input graph $G$.

It remains to analyze the width and the height of the drawing. For ease of notation, we denote by $W_i$ and $H_i$, $0\le i\le m$ the width and height of the drawing $\Gamma_i$, that is, $W_i=W(\Gamma_i)$ and $H_i=H(\Gamma_i)$. In the base of our recursive algorithm, it holds that $W_0=1$ and $H_0=0$. 
By \Cref{eq:height-2}, we obtain $H_i\le H_{i-1} + W_{i-1} +k$. So, 
\begin{align}
\label{eq:Hl}
H_m\le \sum_{i=0}^{m-1} (W_i+k)\le m\cdot(W_m+k)\le n\cdot W_m+ kn.
\end{align}

\noindent By \Cref{eq:width-1} and \Cref{eq:width-2}, we obtain 
\begin{align*}
W_i&\le W_{i-1}+2\Delta\cdot \frac{H_{i-1}+W_{i-1} + k}{k}+|P_i|+\Delta n\\
&=\frac{2\Delta}{k}\cdot H_{i-1}+(1+\frac{2\Delta}{k})\cdot W_{i-1}+(n+2)\Delta+|P_i|.
\end{align*}

\noindent By series expansion, we obtain
\begin{align*}
\notag W_m &\le \sum_{i=0}^{m-1} \left(1+\frac{2\Delta}{k}\right)^{m-i}\left(\frac{2\Delta}{k} H_i+(n+2)\Delta+|P_i|\right)\\
\notag &\le \left(1+\frac{2\Delta}{k}\right)^{m}\cdot\sum_{i=0}^{m-1} \left(\frac{2\Delta}{k} H_i+(n+2)\Delta+|P_i|\right)\\
&\notag \le \left(1+\frac{2\Delta}{k}\right)^{m}\cdot\left(m\cdot \frac{2\Delta}{k} \cdot H_m+(n+2)\Delta m+n\right) & m\le n, n\ge 6\\
&\le \left(1+\frac{2\Delta}{k}\right)^{n}\cdot\left(n\cdot \frac{2\Delta}{k} \cdot H_m+2\Delta n^2\right).
\end{align*}

\noindent If we choose $k=4\Delta n^2$, then we have $\left(1+\frac{2\Delta}{k}\right)^{n}=\left(1+\frac{1}{2 n^2}\right)^{n}$. Since $\ln(1+x)\le x$ for $0<x<1$, we obtain $\left(1+\frac{1}{2 n^2}\right)^{n}\le e^{1/(2n)} \le e^{1/6} < 1.19$, so
\begin{align}
\label{eq:Wl}
    W_m \le 1.19\left(n\cdot \frac{2\Delta}{k} \cdot H_m+2\Delta n^2\right).
\end{align}

\noindent Plugging \Cref{eq:Wl} into \Cref{eq:Hl}, we obtain
\begin{align*}
    H_m&\le 1.19\cdot\left(n^2\cdot \frac{2\Delta}{k} \cdot H_m+2\Delta n^3\right)+kn
    \le \frac{2\cdot 1.19\cdot k\Delta n^3+k^2 n}{k-2\cdot 1.19\cdot \Delta n^2}  \\
     &= \frac{2\cdot 1.19\cdot 4\Delta^2 n^5+16 \Delta^2 n^5}{4\Delta n^2-2\cdot 1.19\cdot \Delta n^2}  = \frac{25.52\Delta n^3}{1.62}\le 15.76\Delta n^3 \in O(\Delta n^3).
\end{align*}

\noindent  Plugging this back into \Cref{eq:Wl}, we can bound the width by
\begin{align*}
W_m&\le 1.19\left(n\cdot \frac{2\Delta}{4\Delta n^2} \cdot H_m+2\Delta n^2\right) \le \frac{1.19}{2n}\cdot H_m+2.38\Delta n^2 \le \frac{18.76\Delta n^3}{2n}+2.38\Delta n^2 \\
&\le 11.76\Delta n^2\in O(\Delta n^2).
\end{align*}

\noindent Reinserting $(v_1,v_2)$ and $(w_\Delta,v_n)$ at the end increases the height by at most $2W_m/(\Delta-3)\le 11.76\Delta n^2\le 1.96\Delta n^3$, since $\Delta\ge 5$ and $n\ge 6$.
Thus, our drawing has area $O(\Delta^2 n^5)$.
\end{proof}

\noindent Using the algorithms in~\cite{kant1993thesis,KantBodlaender1991} (see \cref{sec:preliminaries}), we can augment any planar graph with maximum degree $\Delta\ge5$ to a triconnected planar graph of maximum degree at most 
$\lceil 3\Delta/2\rceil+3$.

\begin{corollary}\label{th:degree-delta-1-bend2}
    Every planar $n$-vertex graph with maximum degree $\Delta\ge 5$ admits a $1$-bend planar grid drawing with at most $ \lceil \frac{9}{2}\Delta \rceil +1$ slopes on a  $O(\Delta n^2) \times O(\Delta n^3)$ grid.
\end{corollary}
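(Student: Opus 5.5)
The plan is to reduce to \Cref{th:degree-delta-1-bend} by augmentation and then delete the added edges. Let $G$ be a planar $n$-vertex graph with maximum degree $\Delta\ge 5$. We may assume $G$ is connected: otherwise we first join its components by adding edges between vertices of degree less than $\Delta$ on the outer faces (or simply treat each component separately and place the drawings side by side, which only adds up widths and keeps heights bounded). We then apply the algorithm of Kant and Bodlaender~\cite{KantBodlaender1991} to obtain a biconnected planar supergraph $G'$ with $\Delta(G')\le \Delta+2$. Next we apply the algorithm of Kant~\cite{kant1993thesis} to obtain a triconnected planar supergraph $G''$ on the same vertex set. Its degree satisfies
\[
\Delta(G'')\le \max\{2,\lceil \tfrac32(\Delta+2)\rceil\} = \lceil \tfrac32\Delta\rceil+3 .
\]
Since $\Delta\ge 5$, we also have $\Delta(G'')\ge 5$, so \Cref{th:degree-delta-1-bend} applies to $G''$ (we may add vertices or edges only if needed to make the maximum degree exactly this value; an upper bound suffices since the slope set is defined monotonically in the degree parameter).

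Applying \Cref{th:degree-delta-1-bend} to $G''$ with $\Delta'' = \lceil 3\Delta/2\rceil+3$ yields a $1$-bend planar grid drawing using at most
\[
3\Delta''-8 = 3\lceil \tfrac32\Delta\rceil+1
\]
slopes. This quantity is at most $\lceil \tfrac92\Delta\rceil+1$ precisely when $3\lceil \tfrac32\Delta\rceil\le\lceil\tfrac92\Delta\rceil$. The inequality holds for even $\Delta$. For odd $\Delta$, however, the left side is $\tfrac92\Delta+\tfrac32$ and the right side is $\tfrac92\Delta+\tfrac12$, so the crude bound overshoots by one. This parity bookkeeping is the main obstacle. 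I would first check whether the augmentation bound can be sharpened to $\lfloor \tfrac32(\Delta+2)\rfloor$, or whether one unit can be saved in the slope count of \Cref{th:degree-delta-1-bend} by using the non-integral parameter $\tfrac32\Delta+3$ in the slope families. Failing that, the fallback is to interpret the corollary's count via the stated augmentation bound $\lceil 3\Delta/2\rceil+3$ with the slope families sized as $\tfrac32\Delta$ rounded appropriately, so that the total becomes $\lceil \tfrac92\Delta\rceil+1$.

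Finally, we remove the augmentation edges from the drawing of $G''$. Deleting edges preserves planarity, the one-bend property, grid placement of vertices and bends, and the slope set. The number of vertices is unchanged, so the area is $O(\Delta'' n^2)\times O(\Delta'' n^3)=O(\Delta n^2)\times O(\Delta n^3)$, since $\Delta''=O(\Delta)$. All steps run in linear time, since the augmentations do.
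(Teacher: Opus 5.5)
Your route is exactly the paper's. Its entire argument is the remark that the augmentations of Kant--Bodlaender and Kant turn $G$ into a triconnected planar graph of maximum degree at most $D=\lceil 3\Delta/2\rceil+3$. \Cref{th:degree-delta-1-bend} is then applied and the added edges are deleted. Your parity computation is also correct, and the paper does not address it. We have $3D-8=3\lceil 3\Delta/2\rceil+1$, which equals $\lceil 9\Delta/2\rceil+1$ for even $\Delta$ but $\lceil 9\Delta/2\rceil+2$ for odd $\Delta$. For example, $\Delta=5$ gives $D=11$ and $25$ slopes, not $24$. So for odd $\Delta$ the stated count is established neither by your proposal nor by the paper's one-line justification.

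The real gap in your write-up is the ``fallback'', which is not an argument. The slope families of \Cref{th:degree-delta-1-bend} have sizes $D-3$, $D-3$, $D-4$ for an integer upper bound $D$ on the degree, and the proof genuinely uses them. A chain vertex of degree $D$ has $D-2$ later neighbours, so Invariant~\ref{inv:ports} requires $D-3$ free steep top ports on each side. You therefore cannot run the construction with the non-integer parameter $\tfrac32\Delta+3$, nor round it down. Closing the odd case needs a genuinely new ingredient: either an augmentation to maximum degree at most $\lfloor 3\Delta/2\rfloor+3$ (which, as you observe, suffices) or a one-slope saving in the theorem. Neither follows from the cited bounds $\Delta+2$ and $\lceil 3\Delta'/2\rceil$. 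Without that, what this argument proves is $3\lceil 3\Delta/2\rceil+1$ slopes: the claimed bound for even $\Delta$, and one more slope for odd $\Delta$.

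A minor point: the slope set is not monotone in the degree parameter, since the flat slopes $j/(D-3)$ change with $D$. What you actually need, and what does hold, is that the theorem's algorithm uses $D$ only as an upper bound on the degree.
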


\subsection{1-bend planar drawing of 3-connected degree-5 planar graphs}
For the special case of planar graphs of maximum degree~$5$, we can slightly improve the number of slopes while increasing the required area by a factor of $O(n^2)$. While the previous algorithm utilized $3\Delta - 8$ slopes, we show that for $\Delta = 5$, a set of five slopes is sufficient to have a $1$-bend planar grid drawing. Our construction ensures that all vertices are drawn on grid points and the drawing is planar. 

\begin{theorem} \label{thm:1-bend_5-degree}
        Every $3$-connected planar graph $G$ with maximum degree $\Delta = 5$ admits a $1$-bend planar grid drawing $\Gamma$ using a fixed set of $5$ slopes. Such a drawing can be constructed on a grid of size $O(n^3) \times O(n^4)$.
\end{theorem}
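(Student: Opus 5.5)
We adapt the incremental canonical-order construction from the proof of \Cref{th:degree-delta-1-bend}. For $\Delta=5$ that slope set has $3\cdot 5-8=7$ slopes: vertical, horizontal, two left-steep $-k,-k/2$, two right-steep $k,k/2$, and one flat $1/2$. We want to reach five. We would use the slope set $\{0,\infty,k,-k,\sigma\}$: horizontal, vertical, one right-steep slope $k$, one left-steep slope $-k$, and one flat slope $\sigma$ (say $\sigma=1$ or $1/2$). Here $k$ is a polynomial parameter to be fixed later, with $k=\Theta(n^2)$ or larger. Since there is only one steep slope per side, each vertex has a single blue top port and a single green top port, plus the vertical one. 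We therefore weaken Invariant~\ref{inv:ports} to the following. Each contour vertex with remaining upper neighbours keeps its vertical port free. If it has at least two remaining neighbours, its $k$- and $-k$-ports are also free. Once a port has been consumed, later outgoing edges of that vertex must be drawn as red edges, using the vertical port and a flat segment.

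The counting that makes this possible is as follows. A vertex of degree at most $5$ in a $3$-connected graph has at least one incoming edge when it is inserted: two if it is a chain vertex, and at least two if it is a singleton with $q\ge 1$. Hence at most three of its edges are outgoing. By the Schnyder-type coloring, when a vertex $w$ lies on the contour, its outgoing edges are used in the following order as the contour sweeps over it:
\begin{enumerate*}[label=(\roman*)]
\item possibly a blue edge (when $w$ is the left neighbour $v_\ell$ of a later $P_j$),
\item red edges (when $w$ is some $w_j$), and
\item possibly a green edge (when $w$ is $v_r$).
\end{enumerate*}
With at most three outgoing edges, $w$ needs at most one blue port ($k$), one green port ($-k$), and at most one vertical port for red edges. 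The exception is when two red edges leave $w$. We plan to resolve this by a case distinction on the out-degree pattern:
\begin{enumerate*}[label=(\alph*)]
\item a red edge can use the vertical port with the flat slope $\sigma$ at the target, or with slope $-\sigma$ where the direction allows;
\item alternatively, for the last-neighbour edge $(w_q,v_g)$, which is drawn purely vertically, we may choose $v_g$ above the unique $w_j$ whose vertical port is free.
\end{enumerate*}
Since $v_g$ has degree at most $5$, we have $q\le 3$. With only one flat slope we must therefore handle up to two flat edges into $v_g$, coming from $w_1$ and $w_2$ with $v_g$ above $w_3$. We would give them the same slope $\sigma$ but make them enter $v_g$ from different sides: one uses $\sigma$ from the lower left, and the other reuses the horizontal port or the $-k$ direction. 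Alternatively, we place $v_g$ above $w_2$ and route the $w_3$ edge with slope $-\sigma$; whether this works depends on whether $-\sigma$ is admissible, so a clean choice might be $\sigma$ together with using $-k$ and $k$ as second segments of red edges.

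Geometrically, each step is as before. New vertices are placed at heights that are multiples of $k$ (Invariant~\ref{inv:ycoord}), and the cuts of Invariant~\ref{inv:cut} are used to stretch so that steep rays clear the contour and bend points land on grid points. The last vertex $v_n$ and the edge $(v_1,v_2)$ are handled separately, as in \Cref{th:degree-delta-1-bend}.

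For the area, the recurrences become $H_i\le H_{i-1}+\sigma W_{i-1}+k$ and $W_i\le (1+2/k)W_{i-1}+\tfrac{2}{k}H_{i-1}+O(n)$. The extra $O(n^2)$ factor comes from how large $k$ must be. With a single steep slope per side, each step's stretch is charged directly, so ensuring that $(1+c/k)^n=O(1)$ while also absorbing the additional stretch used to fix bend parities leads us to take $k=\Theta(n^3)$. Solving the recurrences as before would then give width $O(n^3)$ and height $O(n^4)$.

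The main obstacle will be the port bookkeeping with only one flat slope. We must guarantee that two red edges entering the same $v_g$ never need the same port at $v_g$, and that a vertex never needs its vertical port twice. We would attack this first by exhaustive case analysis over degree patterns, exploiting $q\le 3$ and the restriction $\deg\le 5$.
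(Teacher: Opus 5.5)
There is a genuine gap in the port bookkeeping, which you yourself flag as the main obstacle. However, the obstacle you name is not the real one.

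A vertex $w$ never has two red outgoing edges. Once $w$ is an interior neighbour $w_j$ of a singleton $v_g$, it leaves the contour for good, so a red edge is always the last outgoing edge of its source. Also, for $i<m$, property~(iv) of the canonical order gives $v_g$ a later neighbour. So $v_g$ has at most four neighbours in $G_i$, and $q\le 2$. The case $q=3$ occurs only for $v_n$, which is handled by deleting and reinserting one edge as in \Cref{th:degree-delta-1-bend}. Your fallbacks would not work anyway: $-\sigma$ is not in your slope set, and the horizontal ports of $v_g$ are taken by $(v_\ell,v_g)$ and $(v_g,v_r)$.

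The actual problem is your claim that $w$ needs at most one blue, one green and one vertical port. Its up to three outgoing edges may all lie on the same side, e.g.\ three blue edges, or two blue edges followed by the red one. With only $k$, vertical and $-k$ as non-horizontal top ports, one of two things must happen. Either some blue edge leaves through $-k$ (going up-left, then returning rightwards horizontally), or the red edge must start with a steep segment because a blue edge already occupies the vertical port. In particular, both $w_1$ and $w_2$ may have occupied vertical ports, so your option~(b) is not always available. Hence your proposed invariant cannot be maintained: vertical port reserved, $\pm k$ free whenever two neighbours remain, and red edges starting vertically. Nor can the $x$-monotone contour of Invariant~I.1, on which the planarity arguments of Cases~1 and~2 depend (your ``each step is as before'').

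The missing idea is to decouple colours from source ports. Every outgoing edge of any colour may leave through any free port of $S'$ (vertical, $k$, $-k$). Blue edges take the first free one counterclockwise, green ones the first free one clockwise, and the red edge takes whatever remains. This suffices because there are at most three outgoing edges. In exchange, you weaken $x$-monotonicity to a weaker property: contour vertices that still have pending edges appear from left to right, and their upward vertical rays are unobstructed.

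The edges from $w_1,w_2$, which may both start steeply, are bent at points $p_1,p_2$ on the top line of $\Gamma_{i-1}$. Before that, one stretches along cuts so that the rectangles spanned by $w_j$ and $p_j$ are empty. Then $v_g$ is placed where the slope-$1$ ray from $p_1$ meets the vertical through $p_2$.

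This is also where the extra area really comes from. Both $p_1$ and $v_g$ must lie at heights that are multiples of $k$ (for the steep segments to end on grid points), and the flat slope is $1$. So $x(v_g)-x(p_1)$ must be stretched to a multiple of $k$, costing up to $k$ width per step. With $k=5n^2$ this gives $O(n^3)\times O(n^4)$. Your width recurrence has only an $O(n)$ additive term. Once this alignment cost is included, your choice $k=\Theta(n^3)$ would only yield $O(n^4)\times O(n^5)$.
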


\begin{proof}
    
The slope set $S$ used by our algorithm is defined with respect to a constant $k=5n^2$ as follows:
\begin{enumerate*}[label=S.\arabic*]
    \slopeitem{S_{\mathrm{v}}}{s5:vert}consists only of the vertical slope.
    \slopeitem{S_{\mathrm{h}}}{s5:hor}consists only of the horizontal slope.
    \slopeitem{S_{\mathrm{ls}}}{s5:green}consists only of the \emph{left steep} slope $-k$ (green in~\Cref{fig:degree5}).
    \slopeitem{S_{\mathrm{rs}}}{s5:blue}consists only of the \emph{right steep} slope $k$  (blue in~\Cref{fig:degree5}). Finally,
    \slopeitem{S_{\mathrm{f}}}{s5:red}consists of the \emph{flat} diagonal slope 1
    (red in~\Cref{fig:degree5}).
\end{enumerate*}
We denote $S'=$\;\ref{s5:vert}\;$\cup$\;\ref{s5:green}\;$\cup$\;\ref{s5:blue}.

We replace Invariants~\ref{inv:xmonotone}, \ref{inv:ports} and \ref{inv:edge-colors} with weaker ones. The invariants~\ref{inv:cut} and \ref{inv:ycoord} are preserved. 
\begin{enumerate}[label=I.\arabic*$^\star$]
    \item\label{inv5:xmonotone} The contour of the drawing is not necessarily $x$-monotone. However, all vertices on the contour that still have at least one neighbor in $G\setminus G_{i-1}$ appear in left-to-right order along the contour. Moreover, for each such vertex $v$, the vertical ray corresponding to the slope in \ref{s5:vert} does not intersect any part of the drawing $\Gamma_{i-1}$ (except possibly the first segment of an edge incident to $v$). All non $x$-monotone parts of the contour consist exclusively of vertices that do not have any remaining outgoing edges.

    \setcounter{enumi}{2}
    \item\label{inv5:ports} Every vertex of $C_{i-1}$ has at least as many unoccupied ports in $S'\;$ incident to the outer face of $G_{i-1}$ as it has neighbors in $G \setminus G_{i-1}$.

    \setcounter{enumi}{4}
    \item\label{inv5:edge-colors} Based on their colors, the edges of $G_{i-1}$ have been drawn as follows in $\Gamma_{i-1}$:
    \begin{enumerate}[label=\alph*$^\star$.,ref=I.5.\alph*$^\star$] 
        \item\label{inv5:black} Each black edge of $G_{i-1}$ consists of a single horizontal segment (i.e., its slope is in \ref{s5:hor}).
        \item\label{inv5:blue-green} Each blue and green edge of $G_{i-1}$ consists of two segments; the one incident to its source has a slope in $S'$, while the one incident to its target is in~\ref{s5:hor}. 
        \item \label{inv5:red} Each red edge of $G_{i-1}$ consists of at most two segments; the one incident to its source has a slope in $S'$, while the one incident to its target has a slope in \ref{s5:vert}$\;\cup\;$\ref{s5:red}.  
    \end{enumerate}  
\end{enumerate}

Let $\Pi = (P_0, \ldots, P_m)$ be a canonical ordering of the input graph. As in~\Cref{th:degree-delta-1-bend}, we assume that for some $i>0$ a $1$-bend planar grid drawing $\Gamma_{i-1}$ of $G_{i-1}$ has already been constructed.  The base case, consisting of a single edge, is handled in exactly the same way as before.

We first describe how to handle the case that $P_i=\{v_g\}$ is a singleton of degree greater than~$2$. 
Let $v_\ell,w_1,w_2,v_r$ be the neighbors of $v_g$ in $\Gamma_{i-1}$ as they appear from left to right along $C_{i-1}$ (possibly $w_1=w_2$ if $v_g$ only has degree~3 in $G_i$).
Because of the weaker invariant~\ref{inv5:red}, we can no longer guarantee that the last outgoing edge of a contour vertex uses a vertical slope, it may happen that both $w_1$ and $w_2$ are connected to $v_g$ by sloped segments. 

The edges from $w_1$ and $w_2$ to $v_g$ will be drawn with a segment of any free slope $s_1,s_2\in$\;$S'$ incident to $w_1,w_2$, which is unoccupied by Invariant~\ref{inv5:ports}.
For the edges from $v_\ell$ and $v_r$ to $v_g$, we use the first available slope $s_\ell,s_r\in$\;$S'$ in clockwise order, if the edge is green, or in counter-clockwise order, if the edge is blue. There is at least one unoccupied slope by Invariant~\ref{inv5:ports}, and this choice also ensures that Invariant~\ref{inv5:ports} is maintained.

At $v_g$, we will use a vertical segment for $(w_2,v_g)$, a segment of the slope in \ref{s5:red} for $(w_1,v_g)$ (if $w_1\neq w_2$), and a horizontal segment for $(v_\ell,v_g)$ and $(v_r,v_g)$. This maintains Invariant~\ref{inv5:edge-colors}.

\begin{figure}[t]
\subcaptionbox{\label{fig:degree5-slopes}}{\includegraphics{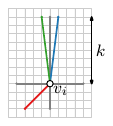}}
\hfill
\subcaptionbox{\label{fig:degree5-drawing}}{\includegraphics[page=2]{figures/degree5.pdf}}
\hfill
\subcaptionbox{\label{fig:nicefigure}}{\includegraphics[page=3]{figures/degree5.pdf}}
\caption{(a) Illustration of the slopes used in \Cref{thm:1-bend_5-degree}. (b) The drawing created by the algorithm in \Cref{thm:1-bend_5-degree} with $k=4$ for the graph in \Cref{fig:schnyder}. (c) Illustration of the case in which $P_i$ is a singleton of degree greater than 2.}
\label{fig:degree5}
\end{figure}

Let $p_1$ and $p_2$ be the points where the rays that correspond to $\rho_1$ and $\rho_2$ intersect the horizontal line $L'$ through $v_{g-1}$ (that is, the top of the drawing $\Gamma_{i-1}$); see \Cref{fig:nicefigure}. We will draw the edges $(w_1,v_g)$ and $(w_2,v_g)$ to have their bend point at $p_1$ and $p_2$, respectively. If $s_1\in$\;\ref{s5:vert}, then we can draw it crossing-free by Invariant~\ref{inv5:xmonotone}. Otherwise, let $(w_1,w'_1)$ be the first edge of $C_{i-1}$ that is encountered when traversing $C_{i-1}$ from $w_1$ to $v_2$ if $s_1\in$\;\ref{s5:blue}, or from $w_1$ to $v_1$ if $s_1\in$\;\ref{s5:green}. To ensure planarity, we stretch this edge such that no part of the drawing lies in the rectangle spanned by $p_1$ and $w_1$, which is possible by Invariant~\ref{inv:cut}. If $s_1\in$\;\ref{s5:blue}, then by Invariant~\ref{inv5:xmonotone}, no vertex or bend point encountered along the traversal of $C_{i-1}$ from $w_1$ to $v_r$ lies to the left of $w_1$. Thus, stretching the edge $(w_1,w'_1)$ by at most $\frac{H(\Gamma_{i-1})}{k}$ ensures planarity. The case $s_1\in$\;\ref{s5:green} is symmetric.
Analogously, we define and stretch the edge $(w_2,w'_2)$.

If $v_g$ has degree 3, we can place it $k$ coordinates above $p_2$. Otherwise,
we have to place $v_g$ at the intersection of the ray through $p_1$ with the slope in \ref{s5:red} and the ray through $p_2$ with the vertical slope. To ensure Invariant~\ref{inv:ycoord}, we might have to stretch the edge $(w_1,w'_1)$ again by up to $k-1$ units such that $x(v_g)-x(p_1)$ and thus $y(v_g)$ becomes a multiple of $k$. 
So far, we have stretched horizontally by at most $2 \cdot \frac{H(\Gamma_{i-1})}{k}+k$, so we have
\begin{align}
    \label{eq5:hor-inner}
    x(v_g)-x(p_1)\le W(\Gamma_{i-1})+2\cdot\frac{H(\Gamma_{i-1})}{k}+k.
\end{align}

\noindent Thus, we obtain 
\begin{align}
    \label{eq5:height}
    \notag y(v_g)&\le H(\Gamma_{i-1})+k+x(v_g)-x(p_1) \le H(\Gamma_{i-1})+k+W(\Gamma_{i-1})+2\cdot\frac{H(\Gamma_{i-1})}{k}+k\\
    &= 2k+W(\Gamma_{i-1})+\left(1+\frac{2}{k}\right)H(\Gamma_{i-1}).
\end{align}

\noindent Let $\rho_\ell$ and $\rho_r$ be the rays starting from $v_l$ and $v_r$ with slope $s_\ell$ and $s_r$, respectively.
Let $p_\ell$ and $p_r$ be the points where $\rho_\ell$ and $\rho_r$ intersect the horizontal line $L$ through $v_g$; see \Cref{fig:nicefigure}. We have to ensure that the segments $(v_\ell,p_\ell)$ and $(v_r,p_r)$ are crossing-free, and that $x(p_\ell)<x(v_g)<x(p_r)$. Defining $(v_\ell,v'_\ell)$ and $(v_r,v'_r)$ similarly to above, we can ensure this by stretching both of them by at most $\frac{y(v_g)}{k}+1$ such that the rectangle spanned by $v_\ell$ and $p_\ell$ as well as the rectangle spanned by $v_r$ and $p_r$ contains no other vertices or bends. This also ensures Invariant~\ref{inv5:xmonotone}.
The total stretching for this step is

\begin{align}
    \label{eq5:hor-outer}
    \notag 2\left(\frac{y(v_g)}{k}+1\right)&\overset{(\ref{eq5:height})}{\le} 2+2\cdot \frac{2k+W(\Gamma_{i-1})+\left(1+\frac{2}{k}\right)H(\Gamma_{i-1})}{k}\\
    & \le 6+\frac{2}{k}W(\Gamma_{i-1})+\left(\frac{2}{k}+\frac{4}{k^2}\right)H(\Gamma_{i-1}).
\end{align}

\noindent The cases of a degree-two singleton and a chain follow the same argument, with the simplification that the connections from $w_1$ and $w_2$ need not be considered.

We will now bound the width and the height of the drawing.
From \Cref{eq5:height}, we get

\begin{align}
    \label{eq5:hm}
    H_m \notag &\le 2k+W_{m-1}+\left(1+\frac{2}{k}\right)H_{m-1} & m\le n\\
    \notag &\le \left( 1 + \frac{2}{k} \right)^n (2kn^2+n W_m) & k=5n^2\\
    \notag &\le \left( 1 + \frac{2}{5n^2} \right)^n (10n^3+n W_m) &  \ln(1+x)\le x\\
    &\le e^{0.4/n} (10n^3+n W_m)
\end{align}

\noindent From \Cref{eq5:hor-inner,eq5:hor-outer}, we get

\begin{align}
    \label{eq5:wm}
    \notag W_m&\le W_{m-1}+2\frac{H_{m-1}}{k}+k+6+\frac{2}{k}W_{m-1}+\left(\frac{2}{k}+\frac{4}{k^2}\right)H_{m-1}& m\le n\\
    \notag &\le \left(1+\frac{2}{k}\right)W_{m-1}+\left(\frac{4}{k}+\frac{4}{k^2}\right)H_{m-1}+k+6\\
    \notag &\le \left(1+\frac{2}{k}\right)^n\left(\left(\frac{4}{k}+\frac{4}{k^2}\right)nH_{m}+kn+6n\right) & k=5n^2 \\
     & \le e^{0.4/n} \left(\left( \frac{4}{5n^2} + \frac{4}{25n^4}\right)nH_m +5n^3 +6n \right)
\end{align}

\noindent Plugging \Cref{eq5:hm} into \Cref{eq5:wm}, we obtain
\begin{align}
\label{eq5:wmmmmmmmm}
    \notag W_m & \le e^{0.8/n} n (10n^3 + nW_m) \left(\frac{4}{5n^2} + \frac{4}{25n^4} \right) + 5n^3 +6n \\
    \notag & \le 10 e^{0.8/n} n^4  \left(\frac{4}{5n^2} + \frac{4}{25n^4} \right) + e^{0.8/n} n^2  \left(\frac{4}{5n^2} + \frac{4}{25n^4} \right) W_m +5n^3 + 6n \\
    \notag & \le \frac{10e^{0.8/n}n^5\left(\frac{4}{5n^3}+\frac{4}{25n^4}\right)+5n^4+6n}{1-\frac{4n^2e^{0.8/n}}{5n^3}-\frac{4n^2e^{0.8/n}}{25n^4}} \\
    \notag & \le \frac{125n^5 + 200 e^{0.8/n} n^4 + 150 n^3 + 40 e^{0.8/n} n^2}{25n^2 - 20 e^{0.8/n} n^2 - 4e^{0.8/n}} & \hspace{-2cm}e^{0.8/n}<1.15\text{ for }n\ge 6 \\
    & < \frac{62.5n^5 + 115n^4+75n^3+23n^2}{n^2 -2.3} 
    \le 67 n^3 + O(n^2)
\end{align}

\noindent Plugging \Cref{eq5:wmmmmmmmm} back into \Cref{eq5:hm}, we can bound the height by

\begin{align*}
    \notag H_m &\le e^{0.4/n} \left(10n^3+n (67n^3 + O(n^2)\right) & \hspace{3.8cm}e^{0.4/n}<1.07\text{ for }n\ge 6\\
    & < 71.69n^4 + 10.7n^3 + O(n^2).
\end{align*}

\noindent Reinserting the edges $(v_1,v_2)$ and the missing edge of $v_n$ at the end increases the height by at most $2W_m\in O(n^3)$ and the width by at most $2H_m/k\in O(n^2)$. We therefore obtain the following theorem.
\end{proof}

\section{\texorpdfstring{2-bend planar drawings of degree-$\Delta$ graphs}{2-bend planar drawings of degree-Δ graphs}}
\label{sec:2-bend}

In this section, we extend a result of Keszegh, Pach, and Pálvölgyi~\cite{KPP2010GD,KPP2013} by showing that every planar graph $G$ with maximum degree $\Delta \geq 3$ admits a 2-bend planar drawing on a grid of polynomial size using at most $\lceil \Delta/2 \rceil$ slopes, if $G$ is biconnected and at most $\lceil \Delta/2 \rceil+1$ slopes otherwise.
We first address the former case.

\begin{theorem}
	\label{thm:biedl-kant-biconnected}
	Every biconnected planar graph $G$ with maximum degree $\Delta \geq 3$ admits a $2$-bend planar drawing with at most $\lceil \Delta/2 \rceil$ distinct slopes on a grid of size $O(n) \times O(n^2 \Delta)$. The only exception is the octahedron graph, which requires $3$ slopes.
\end{theorem}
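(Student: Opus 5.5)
We adapt the Biedl--Kant / Keszegh--Pach--P\'alv\"olgyi construction. We fix $d=\lceil \Delta/2\rceil$ slopes: the vertical slope, the horizontal slope (if $d\ge 2$), and $d-2$ further \emph{steep} slopes $\pm k, \pm k/2,\ldots$, with $k=\Theta(\Delta n)$. We alternate signs so that exactly $\lfloor (d-2)/2\rfloor$ lean right and the rest lean left. Together these give $2d\ge\Delta$ ports at each vertex: $d$ top ports and $d$ bottom ports, one of them horizontal on each side. We compute an $st$-ordering $v_1,\dots,v_n$ with $v_1,v_2,v_n$ on the outer face and $(v_1,v_2)$ on it, as recalled in \cref{sec:preliminaries}. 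Vertices are inserted one at a time, and $v_j$ is placed on row $y=jk$ (or on a row that is a multiple of $k$, to keep bends on the grid). Each edge $(v_i,v_j)$ with $i<j$ leaves $v_i$ through a top port, bends twice, and enters $v_j$ through a bottom port. Its middle segment is vertical, or horizontal if there is room, in the style of Biedl--Kant. Since the top-port segment has slope $\pm k/t$ and rises at most $k$ before the bend, the bend moves by at most $t\le\Delta$ units horizontally. It is therefore a grid point as long as $x$-offsets are integral, which is the reason for choosing steep slopes of the form $k/t$.

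We maintain the Biedl--Kant invariant: the \emph{columns} of unfinished edges (edges with one endpoint drawn and one not) form an $x$-ordered set, one column per edge, and they are exactly the free slots above the current drawing. When $v_j$ arrives, its incoming edges occupy a contiguous interval of columns, by the $st$-ordering and planarity. We place $v_j$ centrally above these columns. The $d$ bottom ports of $v_j$ (left horizontal, the steep left-leaning ones, vertical, the steep right-leaning ones, right horizontal) are assigned to the incoming columns in left-to-right order. Each column edge then rises vertically to the height where its steep segment into $v_j$ starts, or runs horizontally along the row of $v_j$ for the two extreme ones. The outgoing edges of $v_j$ are assigned to its top ports in the same order. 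We open new columns for them by inserting fresh empty columns, which shift everything to the right as in Biedl--Kant. Each vertex inserts at most $\Delta$ new columns of width $O(1)$, so the width stays $O(n)$ in total. Rows have spacing $k=O(\Delta n)$ and there are $n$ of them, so the height is $O(\Delta n^2)$.

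\textbf{Port counting.} We must make sure the $2d$ ports suffice, including when $\deg(v_j)=\Delta$ is odd and the split of in- and out-degree is uneven. I would handle this as Biedl--Kant and KPP do: the horizontal ports are flexible and may serve either as incoming or as outgoing, since a horizontal port can later be bent upward or downward. For $v_1$ and $v_n$ all ports lie on one side, so the closing edge $(v_1,v_n)$ is routed around the outer face. $(v_1,v_2)$ is placed first as the base.

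The main obstacle I expect is the small case $\Delta\le 4$, i.e.\ $d=2$. There only horizontal and vertical slopes exist, so this is exactly Biedl--Kant's orthogonal 2-bend result, which holds for biconnected graphs except the octahedron. For $\Delta=3,4$ we simply invoke it, together with its known $n\times n$ grid bound. For the octahedron we give an explicit 2-bend drawing with 3 slopes, and it needs 3 slopes because it has no orthogonal 2-bend drawing.

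For $\Delta\ge5$ the delicate point is planarity near $v_j$. The steep segments of consecutive incoming edges must not cross vertical column segments of non-incident edges. Placing $v_j$ at height at least $k$ above the top of all current bends, and giving each column width at least $\Delta$, makes every steep segment stay inside the horizontal strip of its own column pair. Verifying this strip argument, together with the integrality of bends, is the core of the proof.
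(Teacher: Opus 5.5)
Your construction breaks down at the point where the steep slopes have to reach an edge's column. In a Biedl--Kant/KPP-style sweep, the column of an incoming edge of $v_j$ is fixed before $v_j$ is placed. The incoming columns of $v_j$ are in general separated by columns of already finished edges and vertices, and inserting columns or stretching only widens these gaps. Hence the horizontal distance $\delta$ between $x(v_j)$ and an incoming column is in general $\Theta(n)$, not at most $t$.

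A last segment of slope $\pm k/t$ bridging that distance rises $k\delta/t$. That is up to $\Theta(kn)=\Theta(\Delta n^2)$ per row and $\Theta(\Delta n^3)$ in total. Its bend is a grid point only if $t\mid k\delta$, so your claim that the bend is a grid point whenever $x$-offsets are integral is false: $k=10$, $t=3$, $\delta=1$ gives a rise of $10/3$. Forcing $t\mid k$ for all $t$ would make $k$ exponential in $\Delta$.

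Your bounds ``rise at most $k$, horizontal move at most $t$'' hold only if each incoming column sits exactly at $x(v_j)\pm t$, and nothing in your sweep enforces this. In effect, only the vertical port (by placing $v_j$ in that column) and the two horizontal ports can serve incoming edges from arbitrary columns, so at most three per vertex.

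The port counting fails for the same underlying reason. If steep lower ports serve only incoming edges and steep upper ports only outgoing ones, the in-degree is capped at $(d-1)+2=d+1$. In an $st$-ordering, however, a vertex may have in-degree $\Delta-1>\lceil\Delta/2\rceil+1$ once $\Delta\ge 6$, and making only the horizontal ports flexible does not close this gap. In KPP, both directions of every non-vertical slope on a side may be used by incoming or outgoing edges whose columns lie on that side: an incoming edge can rise past the row of $v_j$ and descend into an upper port, and $v_j$ is placed above a median incoming column. This only works because non-vertical segments may have arbitrary length.

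The missing idea is the paper's choice of small \emph{integer} non-vertical slopes, $S=\{-\lfloor\Delta/4\rfloor+1,\dots,\lceil\Delta/4\rceil-1,\infty\}$. With these, a segment from a vertex to any integer column ends on a grid point and has vertical extent at most $\lfloor\Delta/4\rfloor(2m-n)$. The paper then:
\begin{itemize}
\item takes the column order and port assignment from the KPP sketch;
\item keeps the column indices as $x$-coordinates, giving width $2m-n$;
\item spaces consecutive vertices $\Theta(\Delta n)$ apart vertically, giving height $O(\Delta n^2)$;
\item picks $v_n$ of degree at most $5$ and draws $(v_1,v_2)$ below via the vertical ports.
\end{itemize}
Note also that columns of width $\Delta$, as you suggest for the strip argument, would give width $\Theta(\Delta n)$, not $O(n)$. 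Your handling of $\Delta\le 4$ via Biedl--Kant matches the paper.
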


\begin{proof}
    Without loss of generality, we may assume that $\Delta$ is even. Since planar graphs with $\Delta=4$ always admit an orthogonal drawing with at most two bends per edge on a grid of size $O(n) \times O(n)$~\cite{BiedlKant1998, PapakostasTollis1996}, we may further assume $\Delta \geq 6$. 
    We choose a vertex $v_n$ of $G$ with degree at most~$5$ (which exists in every planar graph) and consider a planar embedding $\cal E$ of $G$ with $v_n$ on its outer face. 
    The fact that $G$ is biconnected implies that it admits an $st$-ordering $v_1,\ldots,v_n$, such that $(v_1,v_2)$ is an edge of the outer face of~$\mathcal{E}$.
    We will use this $st$-ordering to construct incrementally a $2$-bend planar drawing $\Gamma$ of $G$ on the following set of $\frac{\Delta}{2}$ slopes: $S = \{-\lfloor\frac{\Delta}{4}\rfloor+1,\ldots,0,\ldots,\lceil\frac{\Delta}{4}\rceil-1,\infty\}$; see \Cref{fig:slopes_biedl_kant_1}. 
    In particular, each edge $(v_i,v_j)$ with $i<j$ will be drawn with a \emph{first non-vertical segment} incident to $v_i$ (of possibly zero length), followed by a vertical segment (of non-zero length) and a \emph{last non-vertical segment} incident to $v_j$ (of possibly zero length); see \Cref{fig:slopes_biedl_kant_2}.

	\begin{figure}[tb]
		\captionsetup[subfigure]{justification=centering}
		\subcaptionbox{\label{fig:slopes_biedl_kant_1}}{
			\includegraphics[page=1]{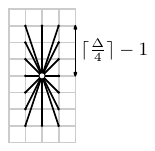}}
        \hfil
		\subcaptionbox{\label{fig:slopes_biedl_kant_2}}{
			\includegraphics[page=2]{slopes_biedl_kant}}
        \hfil
		\subcaptionbox{\label{fig:slopes_biedl_kant_3}}{
			\includegraphics[page=3]{slopes_biedl_kant}}
		\caption{(a) The set of $\lceil \Delta/2 \rceil$ slopes used in~\Cref{thm:biedl-kant-biconnected}. (b) Example construction of a $2$-bend edge $e=(v_i,v_j)$. (c) A sketch of our drawing.}
		\label{fig:slopes_biedl_kant}
	\end{figure}
    
    We first compute a \emph{sketch} drawing $\Gamma'$ of $G$ using the algorithm of \cite{KPP2010GD, KPP2013} and taking $v_1,\ldots,v_n$ as the input $st$-ordering. 
    In the resulting drawing $\Gamma'$, each edge is drawn with at most two bends and contains a vertical segment. These vertical segments lie in consecutive columns of $\Gamma'$ that also contain the vertices of $G$. 
    Given a vertex $v$ of $G$, we denote by $x(v)$ the index of the column of $\Gamma'$ that contains $v$ in the left-to-right ordering of the columns of~$\Gamma'$. 
    Accordingly, given an edge $e$, with slight abuse of notation, we denote by $x(e)$ the index of the column of $\Gamma'$ that contains the vertical segment of $e$ in the left-to-right ordering of the columns of $\Gamma'$; see \Cref{fig:slopes_biedl_kant_1}.
    Let $S'=\{s'_{-\lfloor\Delta/4\rfloor+1},\ldots,s'_0,\ldots,s'_{\lceil\Delta/4\rceil-1},s'_\infty\}$ be the slopes used in~$\Gamma'$ such that $s'_\infty$ is the vertical slope and $s'_i<s'_j$ if $i<j$.
    We produce a drawing~$\Gamma$ of $G$ on a grid of size $O(n) \times O(n^2 \Delta)$ such that every edge segment that is drawn with slope $s'_i$ in~$\Gamma'$ is drawn with slope $i$ in $\Gamma$.    
    Hence, $\Gamma$ and $\Gamma'$ share the same planar embedding. 
    Furthermore, the $x$-coordinate of each vertex $v$ of $G$ will be $x(v)$ in~$\Gamma$, while the $x$-coordinate of the vertical segment of each edge $e$ of $G$ will be $x(e)$. Hence, the width of~$\Gamma$ will be $O(n)$.
    
    Let $G_i$ with $1 \leq i \leq n$ be the subgraph of $G$ induced by $\{v_1, \dots, v_i\}$, and let $\mathcal{E}_i$ be the restriction of $\mathcal{E}$ to $G_i$. We say that an edge $(v_j, v_k)$ of $G$ is a \emph{pending} edge of $G_i$ if and only if $j \le i < k$.  
    For $2 \leq i \leq n$, we denote by $\Gamma_i$ a drawing of $G_i$ which additionally contains the first non-vertical segment (if any) of each pending edge of $G_i$. 
    In the following, we will first describe how to compute drawing $\Gamma_2$. Then, assuming that we have recursively computed a drawing $\Gamma_{i-1}$ with $2 \leq i< n$, we describe how to compute drawing $\Gamma_i$.

    To obtain drawing $\Gamma_2$, we begin by placing $v_1$ at $(0,0)$ and $v_2$ at $(x(v_2), 0)$, as in the original algorithm. We connect $v_1$ to~$v_2$ with a $2$-bend edge drawn below both vertices. This edge consists of two vertical segments, each of length $\lfloor \frac{\Delta}{4} \rfloor (2m - n)+1$, incident to  $v_1$ and $v_2$, and a horizontal segment connecting their lower endpoints; see \Cref{fig:slopes_biedl_kant_3}.
    To complete drawing $\Gamma_2$, we next draw the first non-vertical segment of each pending edge that is incident to $v_1$ and $v_2$ preserving $\mathcal{E}_2$ as follows. For $j\in\{1,2\}$, consider a pending edge $e=(v_j,v_k)$ with $k>2$ and suppose that this edge has a non-vertical segment incident to $v_j$ in $\Gamma'$. Let $s'_\ell$ be the slope of this segment in $\Gamma'$. We draw a segment starting at $v_j$ with slope~$\ell$ until its $x$-coordinate equals $x(e)$. This point serves as the first bend of $e$ and is a grid point, since $S$ consists of integer slopes. Furthermore, the vertical extent of this segment is at most $\lfloor \Delta/4\rfloor \cdot |x(e) - x(v_j)|$. This guarantees that this segment does not cross $(v_1,v_2)$, since the width of the drawing is bounded by $2m-n$, as we will shortly show, and thus, $|x(e) - x(v_j)| \leq 2m-n$.
        
    We now describe how to compute drawing $\Gamma_i$, assuming that we have recursively constructed drawing $\Gamma_{i-1}$.
    We place $v_i$ at $x$-coordinate $x(v_i)$ and $y$-coordinate $(i-1) \cdot (\Delta/2 \cdot (2m - n)+1)$, such that $v_i$ lies above one of its median predecessors in $G_i$, as in the original algorithm. 
    Then, we complete the drawing of the edges connecting $v_i$ to its neighbors in $G_{i-1}$ while preserving $\mathcal{E}_i$. Consider such an edge $e=(v_j,v_i)$ with $i>j$ and assume that $s'_\ell$ is the slope of its last non-vertical segment in $\Gamma'$. In drawing $\Gamma_i$, we draw a segment starting at $v_i$ with slope $\ell$ until its $x$-coordinate equals $x(e)$ followed by a vertical segment that connects it to the endpoint of the first non-vertical segment of the edge $e$ (which has already been drawn, when $\Gamma_j$ was computed).    
    To complete the drawing of $\Gamma_i$, we draw the first non-vertical segments of the pending edges incident to $v_i$ in the same way as described for $\Gamma_2$. 
    It follows that the vertical extent of any non-vertical segment incident to $v_i$ belonging to edge $e$ is bounded by $\lfloor \Delta/4\rfloor \cdot |x(e) - x(v_i)|$.
    
    Since each vertex $v_i$ and its incident edges occupy at most $\deg(v_i)-1$ columns, the total width of the resulting drawing $\Gamma$ of $G$ is bounded by $\sum_{i=1}^n (\deg(v_i)-1) = 2m - n\in O(n)$.
    To estimate the height of $\Gamma$, let $H_{i}$ be the maximum $y$-coordinate in $\Gamma_{i}$. 
    For each $i > 2$, $H_{i}$ is determined by the $y$-coordinate of $v_{i}$ plus the vertical extent of the first non-vertical segment of each of the pending edges of $v_i$ plus the vertical extent of edge $(v_1, v_2)$. 
    Thus, 
    \begin{align*}
    	\notag H_{i-1} &\leq y(v_{i-1}) + \lfloor\frac{\Delta}{4}\rfloor \cdot W_{i-2} + \lfloor \frac{\Delta}{4} \rfloor (2m - n)+1 \\
        &\le  (i-2) \cdot \left(\frac{\Delta}{2} \cdot (2m - n)+1\right) +  \frac{1}{2} \cdot \frac{\Delta}{2} \cdot (2m - n) + \lfloor \frac{\Delta}{4} \rfloor (2m - n)+1 \\
        &< \left(\frac{\Delta}{2} \cdot (2m-n)+1\right)\left(i - \frac{3}{2}\right) + \lfloor \frac{\Delta}{4} \rfloor (2m - n)+1.
    \end{align*}
    The lowest point of any segment of a pending edge of $v_i$ in $\Gamma_i$ is at least 
    \begin{equation*}
    	y(v_i) - \lfloor\frac{\Delta}{4}\rfloor \cdot W_{i-1} + \lfloor \frac{\Delta}{4} \rfloor (2m - n)+1 > \left(\frac{\Delta}{2} \cdot (2m-n)+1\right)\left(i - \frac{3}{2}\right) + \lfloor \frac{\Delta}{4} \rfloor (2m - n)+1 .
    \end{equation*}
    This ensures that all segments of the pending edge of $v_i$ lie above drawing $\Gamma_{i-1}$. Hence $\Gamma_i$ is planar.
  	The total height of the drawing is $O(n \cdot \Delta \cdot n) = O(\Delta n^2)$.
\end{proof}

Since any simply connected planar graph can be augmented to a biconnected planar graph by adding auxiliary edges such that every vertex receives at most two augmenting incident edges \cite{kant1993thesis, KantBodlaender1991}, we obtain the following corollary.

\begin{corollary}
	\label{cor:biedl-kant-simply}
	Every planar graph $G$ with maximum degree $\Delta$ admits a $2$-bend planar drawing with at most $\lceil \Delta/2 \rceil + 1$ distinct slopes on a grid of size $O(n) \times O(\Delta n^2)$.
\end{corollary}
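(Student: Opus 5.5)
The plan is to reduce to \Cref{thm:biedl-kant-biconnected} by augmentation, paying one extra slope for the degree increase. First, assume $G$ is connected: otherwise draw each connected component separately and place the resulting drawings side by side. This adds up the widths, which stay $O(n)$ in total, and takes the maximum of the heights, which is still $O(\Delta n^2)$. Next, apply the algorithm of Kant and Bodlaender~\cite{KantBodlaender1991} (see \cref{sec:preliminaries}) to augment $G$ into a biconnected planar graph $G'$ on the same vertex set. Every vertex receives at most two augmenting edges, so $\Delta(G') \le \Delta + 2$.

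Then apply \Cref{thm:biedl-kant-biconnected} to $G'$, assuming $\Delta(G') \geq 3$. This gives a $2$-bend planar drawing on a grid of size $O(n)\times O(n^2\Delta(G')) = O(n)\times O(\Delta n^2)$ using at most $\lceil (\Delta+2)/2\rceil = \lceil \Delta/2\rceil + 1$ slopes. Deleting the augmenting edges keeps the drawing planar and $2$-bend and does not add any slopes. We must also check the octahedron exception. If $G'$ is the octahedron, then $\Delta(G')=4$ and the octahedron is drawn with $3$ slopes, which is at most $\lceil\Delta/2\rceil+1$ whenever $\Delta\ge 3$. If $\Delta = 2$, we need $3$ slopes but the bound allows only $2$. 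Instead, since $G$ is then a path or cycle, draw it directly with horizontal and vertical segments, for example as a rectangle. The cases $\Delta\le 1$ are trivial.

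The step I expect to be the main obstacle is the bookkeeping around the degree bound. The bound $\Delta(G')\le\Delta+2$ must hold as stated in~\cite{KantBodlaender1991}. Also, the proof of \Cref{thm:biedl-kant-biconnected} assumes $\Delta$ is even and uses an orthogonal drawing for $\Delta(G')=4$. So I would check that every case gives at most $\lceil\Delta/2\rceil+1$ slopes. When $\Delta$ is odd, $\lceil(\Delta+2)/2\rceil = \lceil\Delta/2\rceil+1$ still holds, so the count works out. The area bound carries over directly, because the augmentation leaves the number of vertices unchanged and adds only $O(n)$ edges, so $2m-n\in O(n)$ still holds.
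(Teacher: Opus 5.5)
Your proposal is correct and follows the paper's own argument. Augment with Kant--Bodlaender so that each vertex gains at most two edges, apply \Cref{thm:biedl-kant-biconnected} to obtain $\lceil(\Delta+2)/2\rceil=\lceil\Delta/2\rceil+1$ slopes, and delete the augmenting edges. Your extra handling of disconnected inputs, of $\Delta\le 2$, and of the octahedron exception covers cases that the paper's one-line justification leaves implicit. When you place components side by side, draw them all on one common slope set (the theorem's slope sets are nested in $\Delta$), so that the distinct slopes of different components do not add up.
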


\section{\texorpdfstring{4-bend planar drawings of degree-$\Delta$ graphs}{4-bend planar drawings of degree-Δ graphs}} \label{sec:4-bend}

In this section, we seek to prove that every planar graph of maximum degree $\Delta$ admits a $4$-bend planar grid drawing with at most $\Delta$ slopes on an $O(n)\times O(n)$ grid. Our approach is based on an algorithm by Kaufmann and Wiese~\cite{KaufmannWiese2002}, which given a planar graph $G=(V,E)$ and a set of points $P$ in the plane such that no two points have the same $x$-coordinate, it computes a $2$-bend  planar drawing of $G$ that maps each vertex in $V$ to a point in $P$.
Let the \emph{spine} be the $x$-monotone polyline whose bend-points are exactly the points in $P$. In the produced drawing, every edge is one of the following  (see \Cref{fig:kaufmannwiese}):
\begin{enumerate*}[label=\bf (\roman*), ref=(\roman*)]
    \item a \emph{top edge}: it is drawn completely above the spine with one bend;
    \item a \emph{bottom edge}: it is drawn completely below the spine with one bend; or
    \item a \emph{spine-crossing edge}: it crosses the spine exactly once and has two bend points: one below and one above the spine.
\end{enumerate*}
Note that the drawings produced by their algorithm require a linear number of slopes and, as stated in \cite[Lemma 3.2]{KaufmannWiese2002}, exponential area (in the number of vertices).

\begin{theorem}
\label{thm:kaufmann-wiese}
    Every planar graph of maximum degree $\Delta$ admits a $4$-bend planar drawing with $\Delta$ slopes on a grid of size $O(n)\times O(n)$.
\end{theorem}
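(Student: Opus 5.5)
The plan is to follow the topological book-embedding idea behind the Kaufmann--Wiese algorithm~\cite{KaufmannWiese2002}, but to take the spine as a \emph{horizontal line}. We then replace their one-bend arcs of arbitrary slope by ``rectilinear'' arcs whose only non-axis-parallel parts are short segments at the endpoints. Every planar graph $G$ has a subdivision with at most one subdivision vertex per edge that is subhamiltonian. Equivalently, there is a linear order $u_1,\dots,u_n$ of $V(G)$ such that every edge is a top edge, a bottom edge, or a spine-crossing edge crossing the spine once, and the top (resp.\ bottom) arcs are pairwise non-crossing, i.e.\ they form nested intervals. This order and the side of each edge end are exactly what the algorithm of~\cite{KaufmannWiese2002} computes, and we take them as given. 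The cyclic order of edges around each vertex in this topological drawing is the rotation we must realize.

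\textbf{Slope set and columns.} Let $t=\lceil (\Delta-2)/4\rceil$ and use the slopes $S=\{0,\infty,\pm 1,\pm\frac12,\dots,\pm\frac1t\}$. Then $|S|=2t+2\le\Delta$ for $\Delta\ge 3$; the small cases are handled directly or are covered by earlier results. Vertex $u_i$ gets a private block of $\deg(u_i)+1$ consecutive columns on the $x$-axis and is placed in its middle column. The $j$-th edge end to the right of $u_i$ on the top side leaves $u_i$ with slope $1/j$ and reaches the grid point $(x(u_i)+j,\,1)$. The other three quadrants are symmetric, using slopes $-1/j$ and $1/j$, and the vertical ports at $u_i$ give a zero-length first segment. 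Ports $(j,\text{side},\text{left/right})$ are assigned in the rotation order given by the book embedding, so the short segments at a vertex never cross. Because $\sum_i(\deg(u_i)+1)=2m+n\le 7n$, the width is $O(n)$, independently of $\Delta$; this is why we spend horizontal room per edge end rather than per vertex.

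\textbf{Routing.} A top edge $(u_i,u_k)$ is drawn as a slope segment from $u_i$ to its private column, a vertical segment up to height $h(e)$, a horizontal segment, a vertical segment down, and a slope segment into $u_k$. This uses four bends. A spine-crossing edge is assigned an extra private column $c$ inside the block of one endpoint, say $u_i$ (this plays the role of the subdivision vertex). It leaves $u_i$ by a top port reaching height $1$ at column $c$, then goes vertically down through the spine to depth $-h(e)$, horizontally, then vertically up to the bottom port column of $u_k$, and finally uses a slope segment into $u_k$. Again this is four bends, and it contains the subhamiltonian case with three bends when no crossing is needed. The heights $h(e)$ are chosen by nesting depth: the innermost top arcs get height $2$, and each enclosing arc gets height one more than the maximum over the arcs it encloses, and similarly below the spine. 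Since there are $O(n)$ edges, the height is $O(n)$.

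\textbf{Main obstacle.} The delicate step is proving planarity, i.e.\ that the port assignment can be made simultaneously consistent with (a) the rotation at each vertex, (b) the nesting of arcs, so that an outer arc uses a column farther from $u_i$ than an inner arc on the same side, and (c) the spine-crossing columns, which must lie strictly between the columns of arcs that nest around them on both sides. I would first show that within one quadrant at $u_i$ the edges ordered by rotation are exactly the edges ordered by nesting, innermost nearest to $u_i$. Then assigning $j=1,2,\dots$ in that order makes every vertical segment lie in a distinct column, ordered compatibly with the horizontal segments at increasing heights. Crossings between distinct vertices are then excluded by the standard nested-intervals argument, because all slope segments have height at most $1$ while all horizontal segments lie at height at least $2$ (or at most $-2$).
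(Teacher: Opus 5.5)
There is a genuine gap. It lies in the port counting, not in the planarity step you flagged.

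With a horizontal spine, each non-horizontal slope gives exactly one port above the spine and one below. The horizontal slope gives none you can use, because its rays run along the spine. Your set $S$ therefore offers only $2t+1\approx\Delta/2$ upward ports per vertex, and only $t$ plus the vertical one per quadrant. But the topological book embedding you take as given fixes the side and direction of every edge end. Nothing prevents a vertex from having all its $\Delta$ ends in one quadrant, e.g.\ the first spine vertex with only top edges. There you would need slopes $1/j$ with $j>t$, which are not in $S$. The inequality $4t+2\ge\Delta$ bounds only the total supply of ports, not the demand on each side. Handling an arbitrary top/bottom split with your routing essentially needs $\Delta$ non-horizontal slopes plus the horizontal one.

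The idea you are missing is the paper's diagonal spine. With the vertices on $y=x$, both the horizontal and the vertical slope are genuine ports on each side of the spine, and they also serve for routing. The $j$-th top edge at $v_i$ bends at $p_{v_i}+\mathbf{v}_j$ with $\mathbf{v}_j=(j-2,j-1)$. The $j$-th bottom edge bends at $p_{v_i}-\mathbf{v}_j$, which lies on the same slope. Hence $\max\{t_i,b_i\}\le\Delta$ slopes suffice for every split.

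Your treatment of spine-crossing edges is also not faithful to the embedding. Putting the crossing in a column inside the block of an endpoint $u_i$ moves it from its actual position to just right of $u_i$; say the actual crossing lies between $u_p$ and $u_{p+1}$ with $i\le p<k$. A bottom arc $(u_i,u_q)$ with $i<q\le p$ is legal in the original embedding. However, its interval interleaves with that of the relocated bottom part, which now runs from just right of $u_i$ to $u_k$, so the two cross. Symmetric obstructions arise at $u_k$ from top arcs $(u_q,u_k)$ with $p<q<k$, and both kinds can coexist. So you cannot simply choose the more convenient endpoint either. The paper instead subdivides each such edge at its real crossing and treats the dummy as a spine vertex in its correct position. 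It gives the dummy's two edges the opposite horizontal ports $\pm\mathbf{v}_1$, so smoothing creates no extra bend.

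Finally, the order you propose inside a quadrant is reversed. If the innermost arc takes the column nearest $u_i$, the vertical segment of an enclosing arc (farther column, greater height) crosses the inner arc's horizontal segment. The outermost arc must take the steepest port and the nearest column, which is also what the rotation at $u_i$ requires.
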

\begin{proof}
We will create a drawing that is conceptually similar to the one by Kaufmann and Wiese. 
We first apply their algorithm on a set of points that lie on the diagonal $y=x$ to obtain a drawing $\Gamma$ of $G$. 
Let $\Gamma'$ be the drawing of the graph $G'=(V',E')$ obtained from~$\Gamma$ by subdividing every spine-crossing edge $e$ by a dummy vertex placed on the crossing between~$e$ and the spine. 
We will show how to obtain a $3$-bend planar drawing with $\Delta$ slopes for $G'$ on a grid of size $O(n)\times O(n)$, in which 
the two edges incident to each dummy vertex use at most two bends and are drawn such that they use opposite (horizontal or vertical) ports at the incident dummy vertex. 
Thus, smoothing the dummy vertices yields a $4$-bend drawing of $G$ with $\Delta$ slopes on a grid of size $O(n)\times O(n)$.

Let $v_1,\ldots, v_{n'}$ be the vertices of $V'$ in the order that they appear along  the spine in $\Gamma'$ and assume that every edge $(v_i,v_j)$ with $i<j$ is oriented from $v_i$ to $v_j$. 
Let $t_i$ and $b_i$ be the number of top and bottom edges of $v_i$, respectively.
We place $v_1$ at point $(0,0)$ and every other vertex $v_i$ with $1 < i\leq n'$ at point $p_{v_i}=p_{v_{i-1}}+(d_{i-1},d_{i-1})$, where $d_{i-1}= \max\{1,t_{i-1},b_{i}\}$. Furthermore, the $j$-th top (bottom, respective) edge incident to $v_i$ in clockwise order around $v_i$ starting from the spine will have a bend at point at $p_{v_i}+\mathbf{v}_j$ ($p_{v_i}-\mathbf{v}_j$, respectively), where $\mathbf{v}_j=(j-2,j-1)$. This ensures that $\Delta$ distinct slopes are used by the edge segments incident to each vertex; all the edge segments not directly incident to a vertex are drawn either horizontally or vertically. 

To achieve the above properties, we process the vertices in the order $v_1,\ldots,v_{n'}$. Assume that we have already processed the vertices $v_1,\ldots,v_{i-1}$, and consider the next vertex~$v_i$. We call an edge $(v_k,v_\ell)$ with $k<\ell$ \emph{open} if we have already processed $v_k$ but not yet $v_\ell$, that is, $k < i \leq \ell$. As an invariant of our algorithm, each open top edge is assigned to a grid column that will contain its vertical segment, and each open bottom edge is assigned to a grid row that will contain its horizontal segment; see \Cref{fig:kaufmann-wiese-drawing-intermediate}.
 
Let $e_1^{\mathrm t},\ldots,e_{t_i}^{\mathrm t}$ denote the top edges of $v_i$ in clockwise order around $v_i$, starting from the spine.
As already mentioned, for each such edge $e_j^{\mathrm t}$, we place a bend at  point $p_{v_i}+\mathbf{v}_j$.  
If $e_j^{\mathrm t}$ is open, we assign it to the grid column $x(v_i)+j-2$.
Otherwise, we draw a horizontal segment to the left until reaching the grid column previously assigned to $e_j^{\mathrm t}$, and then a vertical segment downward to the bend point located in the neighborhood of the other endpoint of $e_j^{\mathrm t}$.  
In this way, each edge $e_j^{\mathrm t}$ is drawn using at most three bends.
Symmetrically, let $e_1^\mathrm{b},\ldots,e_{b_i}^\mathrm{b}$ be the bottom edges of $v_i$ in clockwise order around $v_i$, starting from the spine.
For each such edge $e_j^\mathrm{b}$, we place a bend at the point $p_{v_i}-\mathbf{v}_j$. 
If $e_j^{\mathrm b}$ is open, we assign it to the grid row $y(v_i)-j+1$.
Otherwise, we draw a vertical segment to the bottom and then a horizontal segment leftward to the bend point located in the neighborhood of the other endpoint of~$e_j^{\mathrm b}$. Hence, each edge $e_j^{\mathrm b}$ is also drawn using at most three bends and the invariant of our algorithm is satisfied.

Once all vertices have been processed, he have obtained a $3$-bend drawing of $G'$; see \Cref{fig:kaufmann-wiese-drawing-final} for an illustration. The obtained drawing is planar because (i)~no two edge segments are assigned to the same horizontal or vertical grid column, and (ii)~throughout the incremental drawing construction, the planar embedding of the algorithm by Kaufmann and Wiese is maintained due to the choice of the bend points around each vertex.

\begin{figure}
    \centering
    \subcaptionbox{\label{fig:kaufmannwiese}}{\includegraphics[page=2]{kaufmannwiese}}
    \hfill
    \subcaptionbox{\label{fig:slopes_kaufmannwiese}}{\includegraphics{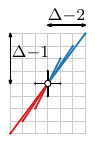}}
    \hfill
    \subcaptionbox{\label{fig:kaufmann-wiese-drawing-intermediate}}{\includegraphics[page=3]{kaufmannwiese}}
    \hfill
    \subcaptionbox{\label{fig:kaufmann-wiese-drawing-final}}{\includegraphics[page=4]{kaufmannwiese}}
\caption{Illustration for our algorithm in \Cref{thm:kaufmann-wiese} for the graph in \Cref{fig:kaufmann-wiese-drawing}. (a) A $2$-bend planar drawing obtained by the algorithm of Kaufmann and Wiese~\cite{KaufmannWiese2002}. Top edges are drawn blue, bottom edges red, and spine-crossing edges green. (b) The vectors used around each vertex in \Cref{thm:kaufmann-wiese}. (c) The drawing after processing $v_3$ and (d) the final drawing.}
    \label{fig:kaufmann-wiese-drawing}
\end{figure}

We now argue that after smoothing the dummy vertices, we obtain a $4$-bend drawing of $G$.
If $v_i$ is a dummy vertex of $G'$, then it has exactly two edges: one incoming edge $(v_h,v_i)$ and one outgoing edge $(v_i,v_j)$ with $h<i<j$, one of which is a top edge, while the other one is a bottom edge. If $(v_h,v_i)$ is a top edge, then both edges use a horizontal segment around~$v_i$. Hence, after smoothing $v_i$, the spine-crossing edge $(v_h,v_j)$ will be drawn with a short segment at $v_h$, followed by a long vertical segment, a horizontal segment that crosses the spine at the old position of $v_i$, a long vertical segment, and a short segment at $v_j$, so it has four bends in total. The case that $(v_h,v_i)$ is a bottom edge is symmetric.

For the drawing area, recall that the difference in $x$-coordinates (and $y$-coordinates) between two consecutive vertices $v_i$ and $v_{i+1}$ is $d_i=\max\{1,t_{i},b_{i+1}\}\le 1+t_i+b_{i+1}$, so we have $x(v_{n'})=y(v_{n'})\le n'+\sum_{i=1}^{n'-1}t_i+\sum_{i=2}^{n'}b_i = n'+2m-t_{n'}-b_1$.
Some edges can extend to the bottom-left of $v_1$ and to the top-right of $v_{n'}$. Namely, one top edge might occupy one column to the left of $v_1$, $b_1-2$ bottom edges might occupy one column to the left of $v_1$ each, and $b_1-1$ bottom edges might occupy one row to the bottom of $v_1$ each. Furthermore, to the right of $v_{n'}$, one column might be occupied by a bottom edge, and $t_{n'}-2$ columns might be occupied by a top edge; above $v_{n'}$, $t_{n'}-1$ rows might be occupied by top edges. Thus, the total width and height of the drawing is at most 
\begin{align*}
x(v_{n'})-x(v_1)+b_1+1+t_{n'}+1&\le n'+2m-t_{n'}-b_1-0+b_1+t_{n'}+2\\
&=n'+2m+2\le (n+m)+2m+2\le 10n-16\in O(n).\qedhere
\end{align*}
\end{proof}

\noindent Since every \emph{subhamiltonian} graph (i.e., a subgraph of a planar Hamiltonian graph) admits an embedding consisting exclusively of top and bottom edges, the following is a direct consequence of~\Cref{thm:kaufmann-wiese}.

\begin{corollary}\label{cor:3-bend}
Every subhamiltonian graph of maximum degree $\Delta$ admits a $3$-bend planar drawing with $\Delta$ slopes on a grid of size $O(n)\times O(n)$.    
\end{corollary}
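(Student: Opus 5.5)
The plan is to reduce \Cref{cor:3-bend} to the construction in the proof of \Cref{thm:kaufmann-wiese}, exploiting the fact that for subhamiltonian graphs no spine-crossing edges are needed. Let $G$ be subhamiltonian with maximum degree $\Delta$. First, fix a planar Hamiltonian supergraph $H$ of $G$ on the same vertex set, with Hamiltonian cycle $v_1,\ldots,v_n,v_1$, and a planar embedding of $H$. The Hamiltonian cycle is a Jordan curve. Hence every edge of $G$ that is not on the cycle lies either inside or outside it, and by planarity of $H$ the inside edges (and likewise the outside edges) form a non-crossing family of chords with respect to the cyclic order $v_1,\ldots,v_n$. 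Cutting the cycle between $v_n$ and $v_1$ and straightening it into the spine through the points $(i,i)$, we obtain a topological drawing of $G$ in the style of Kaufmann and Wiese. Inside chords become top edges and outside chords become bottom edges; cycle edges $(v_i,v_{i+1})$ can be treated as top edges, and the edge $(v_1,v_n)$, if present in $G$, as a top edge as well. Crucially, no edge crosses the spine. So this is exactly the situation of the proof of \Cref{thm:kaufmann-wiese} with $G'=G$: no dummy vertices are inserted.

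Next, I would run the incremental placement from the proof of \Cref{thm:kaufmann-wiese} verbatim. Place $v_1$ at $(0,0)$ and $v_i$ at $p_{v_{i-1}}+(d_{i-1},d_{i-1})$ with $d_{i-1}=\max\{1,t_{i-1},b_i\}$. Put the bend of the $j$-th top (bottom) edge at $p_{v_i}\pm\mathbf{v}_j$. Then assign open top edges to columns and open bottom edges to rows, and close them with one horizontal and one vertical segment. That proof already shows that every top and bottom edge receives at most three bends, that only the $\Delta$ slopes of the vectors $\mathbf{v}_j$ (together with the horizontal and vertical slopes, which are among them, since $\mathbf{v}_1=(-1,0)$ and $\mathbf{v}_2=(0,1)$) are used, and that planarity follows from distinct rows and columns together with preservation of the embedding. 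The only source of a fourth bend there was the smoothing of dummy vertices on spine-crossing edges, and this step is now absent.

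Finally, the area bound is inherited directly: the computation gives width and height at most $n'+2m+2$ with $n'=n$ here, which is $O(n)$. The only point I would check with some care, and the only mild obstacle, is that the nesting order of the chords on each side matches the clockwise order of ports used at each vertex. This is the same embedding-consistency argument as in \Cref{thm:kaufmann-wiese}, since there the Kaufmann--Wiese drawing restricted to top and bottom edges is precisely such a one-sided non-crossing chord system. So the corollary follows with three bends per edge.
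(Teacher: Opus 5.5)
Your proposal is correct and follows the paper's argument. The paper likewise uses the fact that a subhamiltonian graph has a spine embedding with only top and bottom edges, so the construction of \Cref{thm:kaufmann-wiese} inserts no dummy vertices and every edge keeps at most three bends, with the same $\Delta$ slopes and $O(n)\times O(n)$ grid; your derivation of this two-page embedding from the Hamiltonian cycle of a planar supergraph just spells out the step the paper takes as known.
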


\section{Conclusions and Open Problems}\label{sec:conclusions}

In this work, we studied trade-offs between the number of slopes, the number of bends per edge, and the area requirements in planar graph drawing. Our results show that allowing only a small number of bends per edge suffices to obtain polynomial-area drawings while maintaining a relatively small slope set. Our results narrow a gap between previous approaches that achieved few slopes at the expense of super-polynomial area and those that focused primarily on low-degree graphs. However, several questions remain open. Besides tightening the trade-offs between slopes and bends and extending the study to non-planar graph classes, we identify the following open problems.

\begin{itemize}
    \item The algorithm presented in \Cref{sec:1-bend} yields drawings in which almost all edges have a bend. The only exception are the black edges and a red incoming edge per vertex. Following an argument of~\cite{KindermannMSS19}, we can choose a canonical order such that there are at most $(2n+1)/3$ vertices without an incoming red edge. This gives us an upper bound of $(8n-17)/3$ on the total number of bends for the constructed drawing. Adjusting the drawing algorithms, so as to reduce the total number of bends in the resulting $1$-bend drawing is an interesting open problem for future consideration. 
    \item We were unable to derive an extension of the algorithm presented in \Cref{sec:1-bend} to handle biconnected (and simply connected) planar graphs without increasing the number of slopes. Such an extension appears to be non-trivial and may require additional properties on the produced drawings in order to handle rigid 3-connected components of the graph.   
    \item The drawings produced by our algorithms have low \emph{angular resolution}, that is, the minimum angle between any two edge segments incident to the same vertex is small. Developing techniques that also account for angular resolution is an interesting direction for future~work.
    \item It is known that the straight-line slope number of outerplanar graphs and partial 2-trees is $\Delta-1$ and $2\Delta$, respectively~\cite{KnauerMW14,LenhartLMN23}. However, the existing algorithms for these results require superpolynomial area. A natural question is whether we can achieve polynomial area drawings for these graph classes by allowing a slight increase in the number of slopes.
\end{itemize}

\bibliographystyle{plainurl}
\bibliography{bibliography}

\end{document}